\newtheorem{theorem}{Theorem}[section]
\newtheorem{lemma}[theorem]{Lemma}
\newtheorem{proposition}[theorem]{Proposition}
\newtheorem{conj}[theorem]{Conjecture}
\newtheorem{definition}{Definition}[section]
\renewcommand{\qed}{\nobreak \ifvmode \relax \else
	\ifdim\lastskip<1.5em \hskip-\lastskip
	\hskip1.5em plus0em minus0.5em \fi \nobreak
	\vrule height0.75em width0.5em depth0.25em\fi}
\title{Massively Parallel Algorithms and Hardness for Single-Linkage Clustering Under $\ell_p$-Distances}
\author{Grigory Yaroslavtsev \thanks{Indiana University, Bloomington, \texttt{grigory@grigory.us}} \and Adithya Vadapalli \thanks{Indiana University, Bloomington, \texttt{avadapal@indiana.edu}}}
\begin{document}

\maketitle

\begin{abstract}
We present massively parallel (MPC) algorithms and hardness of approximation results for computing  Single-Linkage Clustering of $n$ input $d$-dimensional vectors under Hamming, $\ell_1, \ell_2$ and $\ell_\infty$ distances. All our algorithms run in $O(\log n)$ rounds of MPC for any fixed $d$ and achieve $(1+\epsilon)$-approximation for all distances (except Hamming for which we show an exact algorithm).
We also show constant-factor inapproximability results for $o(\log n)$-round algorithms under standard MPC hardness assumptions (for sufficiently large dimension depending on the distance used). Efficiency of implementation of our algorithms in Apache Spark is demonstrated through experiments on a variety of datasets exhibiting speedups of several orders of magnitude. 
	
\end{abstract}

\section{Introduction}
\subsection{Single-linkage clustering}

Single-Linkage Clustering is one of the oldest methods for clustering multi-dimensional vectors based on the nearest-neighbor rule and has been studied since 1951, see e.g. ~\cite{Z71}. 
It can be used for unsupervised learning and  is one of the cornerstone techniques in data mining (see e.g. a classic text on information retrieval for NLP by Manning, Raghavan and Sch\"utze~\cite{MRS08}).
Applications of Single-Linkage Clustering include reconstruction of semantic relationships from word embeddings such as Word2Vec~\cite{ME16}, phylogenetic tree reconstruction~\cite{GR69}, etc.

We consider the problem of constructing a Single-Linkage Clustering for large-scale data.
Formally, given a dataset consisting of vectors $v_1, \dots, v_n \in \mathbb R^d$ the goal is to construct a partition of the vectors into clusters $C_1, \dots, C_k$ such that the smallest distance between two vectors in different clusters is maximized. Formally, for $i\neq j$ let the single linkage distance $d_p(C_i, C_j) = \min_{v_a \in C_i, v_b \in C_j} \|v_a - v_b\|_p$ where $\|x\|_p=(\sum_i |x_i|^p)^{1/p}$. Then in the $k$-Single-Linkage Clustering ($k$-SLC) problem under $\ell_p$-distance  we aim to find a partition into $k$ clusters that maximizes $\min_{i \neq j} d_p(C_i, C_j)$. 
It is well-known that $k$-SLC can be constructed from the Minimum Spanning Tree (MST) of the underlying metric by taking as clusters connected components resulting from removal of $k-1$ longest MST edges.

Note that with this approach once the MST is constructed it can be used to compute $k$-SLC for any value of $k$. Furthermore, it induces a hierarchical clustering structure that is often desirable in practice.
According to ~\cite{MRS08} the main impediment to this approach in practice that motivates the use of various heuristics is that for large-scale high-dimensional data no practically feasible techniques are currently known for constructing an exact MST.

\subsection{Massively parallel computation}
We present analysis of performance of our algorithms in the Massively Parallel Computation model (MPC) which is the most commonly used theoretical model of computation on synchronous large-scale data processing platforms such as MapReduce and Spark. As we demonstrate through experiments in Spark this model accurately reflects performance of our algorithms on real data.
MPC model has attracted a lot of interest recently. It has emerged through a sequence of papers~\cite{FMSSS08, KSV10, GSZ11, BKS13, ANOY14a} and has been analyzed extensively~\cite{FKLRT15, RVW16}. While several variations of this basic model exist here we follow the strictest known version of the model used in~\cite{ANOY14a} and hence our algorithmic results hold in other versions as well.

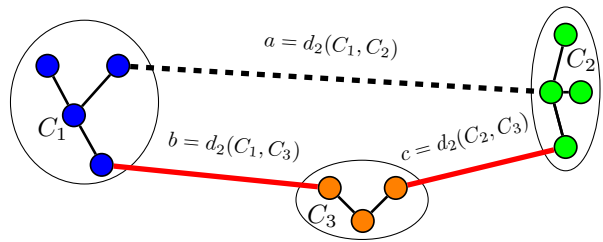
\begin{figure}[ht]
	\begin{center}
		
		\usetikzlibrary{shapes.geometric}
		\trimbox{0cm 5pt 0cm 5pt}{ 
			\begin{tikzpicture}
			[every node/.style={inner sep=0pt}]
			\draw (345.5pt, -131.25pt) ellipse (13pt and 31pt);
			\draw (163.5pt, -136.25pt) ellipse (28pt and 31pt);
			\draw (268.5pt, -173.25pt) ellipse (25pt and 15pt);

			[every node/.style={inner sep=0pt}]
			\node (7) [circle, minimum size=8.5pt, fill=green, line width=0.625pt, draw=black] at (345.5pt, -153.25pt)  {};
			\node (5) [circle, minimum size=8.5pt, fill=green, line width=0.625pt, draw=black] at (340.0pt, -132.5pt)  {};
			\node (6) [circle, minimum size=8.5pt, fill=green, line width=0.625pt, draw=black] at (345.5pt, -110.75pt)  {};
			\node (4) [circle, minimum size=8.5pt, fill=green, line width=0.625pt, draw=black] at (351.25pt, -132.5pt)  {};
			\node (3) [circle, minimum size=8.5pt, fill=blue, line width=0.625pt, draw=black] at (170.0pt, -160.0pt)  {};
			\node (2) [circle, minimum size=8.5pt, fill=blue, line width=0.625pt, draw=black] at (176.25pt, -122.5pt)  {};
			\node (1) [circle, minimum size=8.5pt, fill=blue, line width=0.625pt, draw=black] at (159.5pt, -141.25pt)  {};
			\node (11) [circle, minimum size=8.5pt, fill=blue, line width=0.625pt, draw=black] at (149.5pt, -122.5pt)  {};
			\node (8) [circle, minimum size=8.5pt, fill=orange, line width=0.625pt, draw=black] at (256.25pt, -168.75pt)  {};
			\node (9) [circle, minimum size=8.5pt, fill=orange, line width=0.625pt, draw=black] at (281.25pt, -168.75pt)  {};
			\node (10) [circle, minimum size=8.5pt, fill=orange, line width=0.625pt, draw=black] at (268.75pt, -181.25pt)  {};
			\draw [line width=1, color=black] (3) to  (1);
			\draw [line width=1, color=black] (11) to  (1);
			\draw [line width=1, color=black] (1) to  (2);
			\draw [line width=1, color=black] (10) to  (8);
			\draw [line width=1, color=black] (10) to  (9);
			\draw [line width=1, color=black] (5) to  (7);
			\draw [line width=1, color=black] (5) to  (7);
			\draw [line width=1, color=black] (5) to  (6);
			\draw [line width=1, color=black] (5) to  (4);
			\draw [line width=2, color=red] (3) to  (8);
			\draw [line width=2, color=red] (7) to  (9);
			\draw [line width=2, color=black, style=dashed] (2) to  (5);
			\node at (151.5pt, -146.25pt) {\textcolor{black}{$C_1$}};
			\node at (351.5pt, -120.25pt) {\textcolor{black}{$C_2$}};
			\node at (253.5pt, -179.25pt) {\textcolor{black}{$C_3$}};
			\node at (256.875pt, -114.75pt) [scale = 0.8, rotate=356]{\textcolor{black}{$a = d_2(C_1, C_2)$}};
			\node at (220.25pt, -153.125pt) [scale = 0.8, rotate=355] {\textcolor{black}{$b = d_2(C_1, C_3)$}};
			\node at (307.375pt, -150.0pt) [scale = 0.8, rotate=15] {\textcolor{black}{$c = d_2(C_2, C_3)$}};
			\end{tikzpicture}
		}		
		\caption{ 3-SLC objective is $\min(a,b,c)$,  MST shown in solid.}              \label{fig:slc-ex}
	\end{center}
\end{figure}

\begin{figure}
	\begin{center}

	\usetikzlibrary{shapes.geometric}
	\usetikzlibrary{decorations.pathreplacing}
	
	\begin{tikzpicture}

	\usepgflibrary{arrows}
	\pgfmathsetmacro{\cubex}{1.5}
	\pgfmathsetmacro{\cubey}{0.25}
	\pgfmathsetmacro{\cubez}{0.8}
	\draw (0,0,0) -- ++(-\cubex,0,0) -- ++(0,-\cubey,0) -- ++(\cubex,0,0) -- cycle;
	\draw (0,0,0) -- ++(0,0,-\cubez) -- ++(0,-\cubey,0) -- ++(0,0,\cubez) -- cycle;
	\draw (0,0,0) -- ++(-\cubex,0,0) -- ++(0,0,-\cubez) -- ++(\cubex,0,0) -- cycle;
	
	\draw (2.5,0,0) -- ++(-\cubex,0,0) -- ++(0,-\cubey,0) -- ++(\cubex,0,0) -- cycle;
	\draw (2.5,0,0) -- ++(0,0,-\cubez) -- ++(0,-\cubey,0) -- ++(0,0,\cubez) -- cycle;
	\draw (2.5,0,0) -- ++(-\cubex,0,0) -- ++(0,0,-\cubez) -- ++(\cubex,0,0) -- cycle;
	
	\draw (5.5,0,0) -- ++(-\cubex,0,0) -- ++(0,-\cubey,0) -- ++(\cubex,0,0) -- cycle;
	\draw (5.5,0,0) -- ++(0,0,-\cubez) -- ++(0,-\cubey,0) -- ++(0,0,\cubez) -- cycle;
	\draw (5.5,0,0) -- ++(-\cubex,0,0) -- ++(0,0,-\cubez) -- ++(\cubex,0,0) -- cycle;

	\draw (0,1,0) -- ++(-\cubex,0,0) -- ++(0,-\cubey,0) -- ++(\cubex,0,0) -- cycle;
	\draw (0,1,0) -- ++(0,0,-\cubez) -- ++(0,-\cubey,0) -- ++(0,0,\cubez) -- cycle;
	\draw (0,1,0) -- ++(-\cubex,0,0) -- ++(0,0,-\cubez) -- ++(\cubex,0,0) -- cycle;
	
	\draw (2.5,1,0) -- ++(-\cubex,0,0) -- ++(0,-\cubey,0) -- ++(\cubex,0,0) -- cycle;
	\draw (2.5,1,0) -- ++(0,0,-\cubez) -- ++(0,-\cubey,0) -- ++(0,0,\cubez) -- cycle;
	\draw (2.5,1,0) -- ++(-\cubex,0,0) -- ++(0,0,-\cubez) -- ++(\cubex,0,0) -- cycle;
	
	\draw (5.5,1,0) -- ++(-\cubex,0,0) -- ++(0,-\cubey,0) -- ++(\cubex,0,0) -- cycle;
	\draw (5.5,1,0) -- ++(0,0,-\cubez) -- ++(0,-\cubey,0) -- ++(0,0,\cubez) -- cycle;
	\draw (5.5,1,0) -- ++(-\cubex,0,0) -- ++(0,0,-\cubez) -- ++(\cubex,0,0) -- cycle;

	\draw (0,2,0) -- ++(-\cubex,0,0) -- ++(0,-\cubey,0) -- ++(\cubex,0,0) -- cycle;
	\draw (0,2,0) -- ++(0,0,-\cubez) -- ++(0,-\cubey,0) -- ++(0,0,\cubez) -- cycle;
	\draw (0,2,0) -- ++(-\cubex,0,0) -- ++(0,0,-\cubez) -- ++(\cubex,0,0)-- cycle  ;
	
	\draw (2.5,2,0) -- ++(-\cubex,0,0) -- ++(0,-\cubey,0) -- ++(\cubex,0,0) -- cycle;
	\draw (2.5,2,0) -- ++(0,0,-\cubez) -- ++(0,-\cubey,0) -- ++(0,0,\cubez) -- cycle;
	\draw (2.5,2,0) -- ++(-\cubex,0,0) -- ++(0,0,-\cubez) -- ++(\cubex,0,0) -- cycle;
	
	\draw (5.5,2,0) -- ++(-\cubex,0,0) -- ++(0,-\cubey,0) -- ++(\cubex,0,0) -- cycle;
	\draw (5.5,2,0) -- ++(0,0,-\cubez) -- ++(0,-\cubey,0) -- ++(0,0,\cubez) -- cycle;
	\draw (5.5,2,0) -- ++(-\cubex,0,0) -- ++(0,0,-\cubez) -- ++(\cubex,0,0) -- cycle;
	
	\draw [-stealth] (0,1.75,0) -- (1,0.75,0) node{};
	\draw [-stealth] (0,0.75,0) -- (1,-0.25,0) node{};
	\draw [-stealth] (0,-0.25,0) -- (1,1.75,0) node{};
	\draw [-stealth] (0, 1.75, 0) -- (1, 1.75, 0) node{}; 
	
	\draw [-stealth] (2.5,-0.25,0) -- (4,1.75,0) node{};
	\draw [-stealth] (2.5,0.75,0) -- (4,-0.25,0) node{};
	\draw [-stealth] (2.5,1.75,0) -- (4,0.75,0) node{};
	\draw [red, very thick] (0.25,1.35) arc (-40:10:0.65) ;
	\draw [red, very thick] (0.8, 1.9) arc (-25:40:-0.65) ;

	\draw [decorate,decoration={brace,amplitude=10pt,mirror,raise=4pt},yshift=0pt, thick]
	(-1.4,2.5) -- (-1.4, -0.5) node [black,midway,xshift=-0.75cm] {\rotatebox{90}{$m$ machines}};
	
	\draw [decorate,decoration={brace,amplitude=10pt,mirror,raise=4pt},yshift=0pt, thick]
	(-1.5,-0.5) -- (5.8,-0.5) node [black,midway,yshift=-1.0cm] {$R$ Rounds};
	
	\draw [white,decorate,decoration={brace,amplitude=10pt,mirror,raise=4pt},yshift=0pt, thick]
	(2,2.5) -- (-0.5, 2.5) node [black,midway,yshift= 0.2cm] {\textcolor{red}{$\boldmath{\le s}$} bits sent/received};
	\end{tikzpicture}	
	
	\caption{MPC model of computation}\label{fig:mpc}
\end{center}
\end{figure}
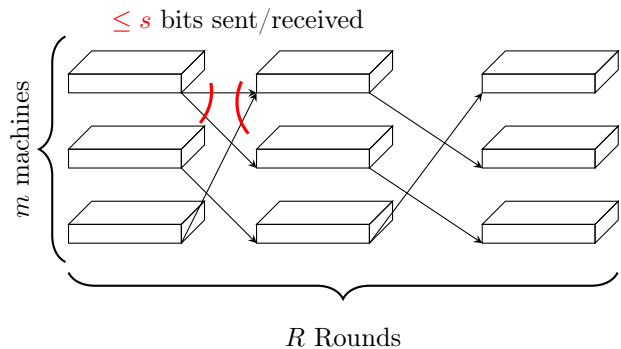

\vspace{20pt}

\vspace{10pt}

In the MPC model we are given access to $m$ identical processors with local RAM space $s$ on each. For an input of size $n$ the total space available to all processors is $m \cdot s = \tilde O(n)$.  The computation is performed in synchronous rounds. In each round each machine: 1) performs a local computation on its data (under its local space restriction of $s$), 2) sends and receives messages of total length at most $s$ to other machines which are received before the next round begins (see Figure~\ref{fig:mpc}).
Note that restriction of $s$ on the total length of received messages follows from the local space constraint assuming there is no computation performed on the fly on incoming data. 
Furthermore, we assume that the most time/space-efficient known algorithm for local subproblems (in our case almost linear-time and space) is used on each machine during the round.

In this setup the key complexity measure of performance in such computation is the number of rounds it takes to complete it as other characteristics such as time and communication depend directly on it. The parameter $s$ is set to $n^{\alpha}$ for some fixed constant $\alpha < 1$, see~\cite{KSV10, ANOY14a} for more details. In this setting of parameters sorting can be done in $O(1)$ rounds~\cite{GSZ11} while sparse graph connectivity takes $O(\log n)$\cite{RMCS13, KLMRV14} which is conjectured to be optimal~\cite{KSV10, BKS13,RMCS13, RVW16}.  It also appears to be folklore that an $O(\log n)$-round algorithm for MST in sparse graphs can be obtained via a simulation of Boruvka's algorithm in MPC. We use these facts extensively in this paper.

\subsection{Our results and previous work}
While scalable algorithms with provable guarantees for other popular clustering methods such as $k$-means and  $k$-median are known ~\cite{BMVKV12, BEL13} we are not aware of any such algorithms for Single-Linkage Clustering (with the exception of recent work of~\cite{DBBMHLK17} who consider a more general graph metric setting and hence get results which are inherently different from our work).
Also despite the fact that scalable heuristics exist for $k$-SLC and MST computation for vector data, e.g.~\cite{J15}, the only MPC algorithm with provable guarantees in this area that we are aware of is~\cite{ANOY14a}\footnote{For general graph metrics an MST algorithm in MPC is given in~\cite{KSV10}. In our case using this algorithm directly would imply a quadratic increase in space since our graph is implicitly given by $n^2$ distances between the vectors and hence constructing the graph explicitly is infeasible under the overall space restriction.}. For other recent work on geometric data structures and algorithms in the MPC model see~\cite{AFMN16,NFMA16} and results on distributed constructions of coresets~\cite{AHV05,IMMM14,BBLM14}.

In~\cite{ANOY14a} it is shown that a $(1+\epsilon)$-approximate MST under $\ell_2^d$ can be constructed in $O(1)$ rounds of MPC for constant dimension.
However, while the overall cost of the MST is a good approximation to the optimum the length of any given edge can be arbitrarily distorted. This makes it impossible to directly use the algorithm of~\cite{ANOY14a} for the Single-Linkage Clustering problem. For example, consider an input corresponding to a set of points on the line shown in Figure~\ref{fig:bad-ex} and $k = 2$. In this case a $(1 + \epsilon)$-approximate MST would not necessarily lead to a $(1 + \epsilon)$-approximate clustering as any such clustering would have to have clusters $\{1, \dots, n - 1\}$ and $\{n\}$ which are at distance $100$ from each other.
Moreover, the algorithm of~\cite{ANOY14a} will indeed introduce edges of length $\Omega(\epsilon n)$ into its approximate MST between the first $n - 1$ points if run on this example.  Hence for the MST constructed using~\cite{ANOY14a} the basic approach of removing the longest edge to obtain a $2$-SLC will result in two clusters which are at distance $1$ with a very large probability.

\begin{figure}
	\begin{center} 
	\usetikzlibrary{shapes.geometric}
	\begin{tikzpicture}
	[every node/.style={inner sep=0pt}]
	\node (1) [circle, minimum size=7.5pt, fill=black, line width=0.625pt, draw=black] at (31.25pt, -56.25pt)  {};
	\node (2) [circle, minimum size=7.5pt, fill=black, line width=0.625pt, draw=black] at (50.0pt, -56.25pt)  {};
	\node (3) [circle, minimum size=7.5pt, fill=black, line width=0.625pt, draw=black] at (68.75pt, -56.25pt)  {};
	\node (4) [circle, minimum size=7.5pt, fill=black, line width=0.625pt, draw=black] at (87.5pt, -56.25pt)  {};
	\node (5) [circle, minimum size=7.5pt, fill=black, line width=0.625pt, draw=black] at (106.25pt, -56.25pt)  {};
	\node (6) [circle, minimum size=7.5pt, fill=black, line width=0.625pt, draw=black] at (125.0pt, -56.25pt)  {};
	\node (7) [circle, minimum size=7.5pt, fill=black, line width=0.625pt, draw=black] at (206.25pt, -56.25pt)  {};
	\draw [line width=0.625, color=black] (1) to  (2);
	\draw [line width=0.625, color=black] (2) to  (3);
	\draw [line width=0.625, color=black] (3) to  (4);
	\draw [line width=0.625, color=black] (4) to  (5);
	\draw [line width=0.625, color=black] (5) to  (6);
	\draw [line width=0.625, color=black] (6) to  (7);
	\node at (31.25pt, -70.0pt) {\textcolor{black}{1}};
	\node at (50.0pt, -70.0pt) {\textcolor{black}{2}};
	\node at (125.0pt, -70.0pt) {\textcolor{black}{$n$ - 1}};
	\node at (206.25pt, -70.0pt) {\textcolor{black}{$n$}};
	\end{tikzpicture}
	
	\caption{$\forall i \le n-1 \colon \|v_{i-1} - v_{i}\| = 1 $, $\|v_{n - 1} - v_n\|= 100$.}              \label{fig:bad-ex}
\end{center}
\end{figure}
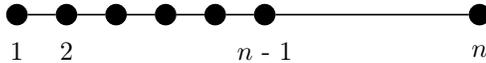

In this paper we show how to overcome this difficulty and give a different algorithm inspired by~\cite{ANOY14a} that allows to compute an approximate Single-Linkage Clustering.
While in~\cite{ANOY14a} only $\ell_2^d$ metric is considered here we further extend this framework so that it also applies to $\ell_1^d$ and $\ell_\infty^d$ with similar performance guarantees.
Perhaps most interestingly, while an arbitrarily good MST approximation can be computed in $O(1)$ rounds of MPC (for fixed dimension) our algorithms for $k$-SLC run in $O(\log n)$ rounds. 
As it turns out, such an increase is likely to be necessary.
We justify it through a number of (conditional) hardness results. Our results show that even for $k=2$ assuming two most popular conjectures in the MPC literature regarding complexity of sparse connectivity no $o(\log n)$-round algorithm can compute $k$-SLC for sufficiently large dimension of the data with better than some fixed constant-factor approximation that depends on the distance metric used. See Table~\ref{tbl:results} for a summary of these results\footnote{While our algorithms work in the weakest version of MPC model, our hardness results also hold in stronger versions for which hardness of sparse connectivity is conjectured, see~\cite{RVW16} for further details. Furthermore, in hardness results for $\ell_0$ and $\ell_1$ that require dimension $d = \Omega(n)$ the result holds for $O(1)$-sparse vectors, i.e. the overall input size is still $O(n)$ words.}.

In order to complete the picture of approximability of $k$-SLC under the most frequently used $\ell_p$-distances we also give algorithms and hardness results under Hamming distance (commonly referred to as $\ell_0$). In contrast to other distances studied in this paper we are able to completely resolve approximability of the $k$-SLC problem for constant-dimensional data in this case.
As we show, there exists an exact algorithm for $d = O(1)$ that runs in $O(\log n)$ rounds of MPC while under Conjecture~\ref{conj:connectivity} no algorithm running in $o(\log n)$ rounds can obtain better than $2$-approximation even for $d = 2$. See Table~\ref{tbl:results} for details.

\begin{table}[t]
	\caption{Approximation and hardness of $k$-SLC in MPC under $\ell_p$-distances}
	\label{tbl:results}
	\centering
	\begin{tabular}{lll}
		\toprule
		\cmidrule{1-3}
		& Approximation in $O(\log n)$ rounds   &  Hardness of approximation in $o(\log n)$ rounds \\
		\midrule
		$\ell_0$ & Exact for $d = O(1)$, Thm.~\ref{thm:hamming-k-slc}  &   $2$ for $d = 2$  under Conj.~\ref{conj:connectivity}, Thm.~\ref{thm:hamming-k-slc-hardness}\\
		&&$3$ for $d = \Omega(n)$  under Conj.~\ref{conj:cycles}, Thm.~\ref{thm:l1l2-k-slc-hardness}\\
		\midrule
		$\ell_1$     & $(1 + \epsilon)$  for $d = O(1)$, Thm.~\ref{thm:k-slc-main} &  $3$ for $d = \Omega(n)$ under Conj.~\ref{conj:cycles}, Thm.~\ref{thm:l1l2-k-slc-hardness}    \\
		&& $2$ for $d = \Omega(n)$ under Conj.~\ref{conj:connectivity}, Thm.~\ref{thm:l1l2-k-slc-hardness} \\
		\midrule
		$\ell_2$     &  $(1 + \epsilon)$ for $d = O(1)$, Thm.~\ref{thm:k-slc-main} &    $1.84 - \epsilon$ for $d = \Omega(\frac{\log n}{\epsilon^2})$ under Conj.~\ref{conj:cycles}, Thm.~\ref{thm:l1l2-k-slc-hardness} \\
		&&$1.41 - \epsilon$ for $d = \Omega(\frac{\log n}{\epsilon^2})$ under Conj.~\ref{conj:connectivity}, Thm.~\ref{thm:l1l2-k-slc-hardness} \\
		\midrule
		$\ell_\infty$     &  $(1 + \epsilon)$ for $d = O(1)$, Thm.~\ref{thm:k-slc-main} &    $2$ for $d = \Omega(\log n)$ under Conj.~\ref{conj:connectivity}, Thm. 7.1 ~\cite{ANOY14b} \\
		\bottomrule
	\end{tabular}
\end{table}

\subsection{Our techniques}
Our algorithms under $\ell_1, \ell_2$ and $\ell_\infty$ all share the same high-level structure: we tackle the problem of the input having $O(n^2)$ edges by first constructing a sparsifier that only has $O(n \log n)$ edges and then run an MST algorithm on this sparsifier.  In order to construct a sparsifier we execute the $(1 + \epsilon)$-approximate MST algorithm of~\cite{ANOY14a} $O(\log n)$ times and collect all edges of the MSTs constructed in these executions. We then run an exact $O(\log n)$-round exact MST algorithm on this set of $O(n \log n)$ edges and output clusters resulting from removing $k - 1$ longest edges of the resulting MST.
Note that the executions of~\cite{ANOY14a} can be done in parallel and hence it is the second step that introduces $O(\log n)$ rounds into the overall complexity of the algorithm.  Our algorithms under $\ell_1, \ell_2$ and $\ell_\infty$ are given in Section~\ref{sec:algorithms} with some specific technical details deferred to Section~\ref{sec:implementation}.
Assuming the same high-level structure this approach is unlikely to be improved as there are no known algorithms for solving MST in sparse graphs in $o(\log n)$ rounds.

In fact, we make the above observation formal by giving reductions from two most popular problems conjectured to require $\Omega(\log n)$ rounds in the MPC model: sparse connectivity (Conjecture~\ref{conj:connectivity}) and a stronger ``one cycle vs. two cycles'' problem (Conjecture~\ref{conj:cycles}).
Our reductions follow the same general strategy -- we introduce a vector $v_i \in \mathbb R^n$ for each vertex in the input graph. This vector is initially set to be $e_i$, the $i$-th standard unit vector. Then for each edge $(i,j)$ adjacent to the vertex $i$ we update the coordinate $j$ of the vector by adding a carefully chosen value $\xi$. This ensures that the for pairs of points which are connected by an edge the distance between their correponding vectors is different from the distance between points which are not connected by an edge. The parameter $\xi$ is then chosen to maximize the ratio of between distances in these two cases. Details are given in Section~\ref{sec:hardness}.

Under $\ell_0$ (Hamming distance) we can't construct a $(1 + \epsilon)$-approximate MST using~\cite{ANOY14a}  and hence our algorithms and hardness results are quite different.
Using sorting as a primitive we construct an auxiliary graph and then run an $O(\log n)$-round connectivity algorithm on it $d$ times. This way we obtain an exact MST and hence an exact $k$-SLC for any value of $k$.
Details are given in Section~\ref{sec:exact-hamming-mst}. Our hardness reduction in this case is also quite different as we construct a hard instance by creating a set of points in 2D instead of using high-dimensional vectors. Hence our result rules out an $o(\log n)$-round $2$-approximation even for $d = 2$. See Section~\ref{sec:hamming-hardness} for details.
\vspace{10pt}
\subsection{Experimental results}
We implemented our algorithm (for $\ell_2$ distances) in Java on Apache Spark and empirically evaluated the performance. The largest dataset we used was the US Census data from the UCI ML repository which has been used widely in literature ($n = 2458285$, $d=8$). 
Note that storage of the $n^2$ adjacency would take nearly 24TB of memory and hence building a complete graph locally is infeasible using commodity hardware.  We observed speedups of several orders of magnitude compared to our benchmark sequential Prim's algorithm when using 200 reducers. We remark that the speedup is not just due to the parallelism in our algorithm but also due to the use of approximation which is helpful even if the algorithm is executed locally. See Section~\ref{sec:experiments} for details.

\section{Algorithms}\label{sec:algorithms}

\subsection{Partition-based algorithm}

\begin{theorem}\label{thm:k-slc-main}
	For each of the three metrics $\ell_1^d, \ell_2^d$ and $\ell_\infty^d$ for any constants $0 < \eta \le 3$, $0 < \alpha < 1/2$ and $d = O(1)$ there exists a constant $\kappa > 0$ such that if $(\kappa \eta)^{-2d} \le s^{1 - 2 \alpha}$ then there exists an $O(\log n)$-round MPC algorithm that computes $(1 + \eta)$-approximate $k$-Single-Linkage Clustering of an input set of vectors $v_1, \dots, v_n \in \mathbb R^d$ for all values of $k$ under these metrics. The algorithm is randomized and produces correct result with high probability. Given access to machines with RAM space $s$ it uses $\tilde O(n/s)$ machines and time at most $\tilde O(s)$ per round on each machine.
\end{theorem}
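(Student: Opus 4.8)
The plan is to reduce $k$-SLC to approximate MST via the standard fact (noted above) that removing the $k-1$ longest edges of an MST yields the optimal $k$-SLC clustering, and then to build an MST that preserves \emph{every} edge up to a $(1+\eta)$ factor rather than merely the total cost. Concretely, I would run the $(1+\epsilon)$-approximate MST routine of~\cite{ANOY14a} $R = \Theta(\log n)$ times with independent random grid shifts, take the sparsifier $H$ to be the union of all edges appearing in these $R$ approximate trees (so $|E(H)| = O(n\log n)$), compute an exact MST $T_H$ of $H$ by a Boruvka simulation in $O(\log n)$ rounds, delete its $k-1$ longest edges, and output the resulting components. Setting $\epsilon = \eta/4$ will give the claimed $(1+\eta)$ guarantee.

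The crux is a sparsifier lemma asserting that, with high probability, for every edge $(u,v)$ of the optimum MST $T^*$ the graph $H$ contains an edge of length at most $(1+4\epsilon)d_p(u,v)$ crossing the cut $(S,T)$ that removing $(u,v)$ induces in $T^*$. To prove it I would fix the recursion level at which edges of length at least $(1+4\epsilon)d_p(u,v)$ incident to $u$ are first processed and call $B_{u,v}$ the bad event that $u$ and $v$ lie in different cells at that level; since the cell diameter there is $\Omega(d_p(u,v)/\epsilon)$, the random-shift grid separates them with probability $O(\epsilon)$, so $\Pr[B_{u,v}\vee B_{v,u}] = O(\epsilon)$. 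Because the $R$ runs are independent, a union bound over the $n-1$ optimum-MST edges shows that with high probability at least one run avoids every such bad event simultaneously. Conditioning on such a run I would split into two cases as in the MST-cost analysis: either $u$ and $v$ were already merged at a lower level, in which case the connecting edge has length at most $(1+4\epsilon)d_p(u,v)$ because longer edges have not yet been considered; or they are merged at the current level, where routing through the nearest points $u',v'$ of the previous $\epsilon^2 L'$-net costs at most $d_p(u,v)+2\epsilon^2 L'$, and the bound $\epsilon L' < (1+4\epsilon)d_p(u,v)$ forces this to be at most $(1+4\epsilon)d_p(u,v)$.

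Given the sparsifier lemma I would finish through the threshold/connectivity correspondence for single linkage. Because $H$ is a subgraph of the complete distance graph, any pair connected using $H$-edges of length $<\tau$ is connected using true edges $<\tau$; conversely the lemma shows, by processing optimum-MST edges in increasing order, that any pair connected at true threshold $\tau$ is connected in $H$ at threshold $(1+4\epsilon)\tau$. Sandwiching the component-count function between these two bounds shows that deleting the $k-1$ longest edges of $T_H$ produces a clustering whose minimum inter-cluster distance is within a $(1+4\epsilon)=(1+\eta)$ factor of the optimum. The same grid-and-net machinery applies to $\ell_1^d$ and $\ell_\infty^d$ once the cell-diameter-to-side-length and net-granularity constants are tracked, the only effect being a change in the hidden constant $\kappa$; a cell of side $L$ carries a net of $(\kappa\eta)^{-\Theta(d)}$ points, and the hypothesis $(\kappa\eta)^{-2d}\le s^{1-2\alpha}$ is exactly what guarantees that each per-cell local subproblem (and the net it emits) fits inside the local space $s=n^{\alpha}$. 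For the resource bound, the $R$ approximate-MST calls run in parallel in $O(1)$ rounds each, sorting primitives take $O(1)$ rounds, and the exact MST on $O(n\log n)$ edges takes $O(\log n)$ rounds, so the whole algorithm runs in $O(\log n)$ rounds on $\tilde O(n/s)$ machines with $\tilde O(s)$ time per round.

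The main obstacle, and the whole reason the round count jumps from $O(1)$ to $O(\log n)$, is the per-cut preservation: \cite{ANOY14a} only controls the total length $\sum_e \text{len}(e)$ and can inflate a single edge to length $\Omega(\epsilon n)$ (as the line example of Figure~\ref{fig:bad-ex} shows), so the delicate part is the probabilistic bound on $\Pr[B_{u,v}\vee B_{v,u}]$ together with the net-granularity telescoping that converts ``$u$ and $v$ share a cell'' into an actual short \emph{connecting} edge present in $H$. A secondary technical point is carrying the grid and net estimates over to $\ell_1$ and $\ell_\infty$ with constants clean enough to feed the space condition $(\kappa\eta)^{-2d}\le s^{1-2\alpha}$.
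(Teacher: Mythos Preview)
Your high-level algorithm matches the paper exactly: run the \cite{ANOY14a} approximate-MST routine $\Theta(\log n)$ times, take the union $H$ of all output edges, and compute an exact MST of $H$ via Boruvka. Where you diverge is in the \emph{analysis} of why $H$ is a good sparsifier, and your version contains a real (though easily repaired) gap.

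You bound $\Pr[B_{u,v}\vee B_{v,u}] = O(\epsilon)$ for each optimum-MST edge $(u,v)$ and then assert that with $R=\Theta(\log n)$ independent runs ``at least one run avoids every such bad event simultaneously,'' after which you ``condition on such a run.'' This does not follow. With $\epsilon=\eta/4$ a fixed constant, the probability that a \emph{single} run avoids all $n-1$ bad events need not be bounded away from zero: the events $B_{u,v}$ for different MST edges share the same random shift and a union bound only gives $\Pr[\exists\text{ bad event}]\le (n-1)\cdot O(\epsilon)$, which is vacuous. What $R$ independent repetitions actually buy you is that for each \emph{fixed} edge $(u,v)$ some (edge-dependent) run $i(u,v)$ avoids $B_{u,v}$ with probability $1-O(\epsilon)^R$, and a union bound over the $n-1$ edges makes this hold for all of them simultaneously. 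Your two-case analysis must then be carried out in run $i(u,v)$ separately for each edge; the short crossing edge it produces lands in $H$ because $H$ collects edges from \emph{all} runs, not just one. With that correction the sparsifier lemma and your threshold-sandwiching finish both go through.

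The paper takes a different analytical route that sidesteps this pitfall. Instead of a direct geometric cell-separation bound, it invokes (Lemma~\ref{lem:anoy-metric}) that each run outputs a $(1+\delta)$-cut-preserving spanning tree for a random weight function $w^+$ with $w\le w^+$ and $\mathbb{E}[w^+(x,y)]\le (1+\gamma)w(x,y)$. Markov's inequality gives $\Pr[w^+(x,y)\le (1+2\gamma)w(x,y)]\ge 1/2$ for every pair, so after $c\log n$ samples a union bound over all $n^2$ pairs yields an event $\mathcal{E}$ under which every pair is well-approximated in some sample. From $\mathcal{E}$ and the cut-preserving property the paper shows (Proposition~\ref{prop:shortest-edge-approx}) that for \emph{every} partition the union graph contains a crossing edge within $(1+2\gamma)(1+\delta)$ of the true shortest, and finishes with an approximate-Kruskal argument giving $MST_i(\bar w_k)\le (1+2\gamma)(1+\delta)MST_i(w)$ for all $i$. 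Your threshold/connectivity sandwiching is an alternative packaging of this last step. The paper's abstraction yields a cleaner probability calculation (constant success per pair, independent of the hidden constant in $O(\epsilon)$) at the price of importing the cut-preserving lemma as a black box; your direct geometric argument is more hands-on but requires the per-edge/per-run bookkeeping above and a slightly more careful choice of $\epsilon$ so that $O(\epsilon)$ is genuinely below $1$.
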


In this section we describe a generic partition-based algorithm, Algorithm~\ref{alg:partition-mst}, that is used to prove the above theorem. We also give analysis of its approximation guarantee. Algorithm~\ref{alg:partition-mst} relies on $(a,b,c)$-distance-preserving partitions and uses Algorithm~\ref{alg:unit-step} which we describe in Section~\ref{sec:implementation}.

\begin{algorithm}
	\SetKwInOut{Input}{input}
	\SetKwInOut{Output}{output}
	\Input{Vectors $v_1, \dots, v_n \in \mathbb R^d$, parameters $\eta, \alpha, p$}
	\Output{Approximate MST for $v_1, \dots, v_n$ under $\ell_p^d$}
	\caption{Generic Partition-based Distributed Single-Linkage Clustering Algorithm}\label{alg:partition-mst}
	$E = \emptyset$ \\
	Set $a = 1/(s^{\alpha/d} -1), b=poly(d), c = s^{\alpha}, L = O(\log_{1/a} n), \epsilon = \min\left(\frac{\eta}{6 c_1 L b}, \frac{\eta}{3 c_2}\right)$\\
	Repeat $O(\log n)$ times sequentially: \\
	\quad\quad\quad Sample partition $P$ with $L$ levels from \\ \quad \quad \quad \quad $(a,b,c)$-distance-preserving  family $w_P$ \label{step:partition}\\
	\quad \quad \quad Execute unit step Algorithm~\ref{alg:unit-step} with $\rho = \ell_p^d$ for \\ 
	\quad \quad \quad \quad each cell in $P$ \label{step:recursion} with parameter $\epsilon$\\
	\quad \quad \quad $E' = $ set of edges output in the previous step \\
	\quad \quad\quad $E = E \cup E'$ \\
	Run Boruvka's MST algorithm on $G = (V,E)$ \label{step:boruvka}\\
\end{algorithm}

Let $M(S,\rho)$ be a metric space and  $w \colon S \times S \rightarrow \mathbb R^+$ be a weight function $w(x,y) = \rho(x,y)$. We think of $w$ as representing weights of edges in a complete graph. Let $MST_i(w)$ denote the weight of the $i$-th Minimum Spanning Tree edge of this graph sorted in non-decreasing order.
Let $w^+ \colon S \times S \rightarrow \mathbb R^+$ be a random family of functions that satisfies that for each $x,y$ it holds that $w(x,y) \le w^+(x,y)$ and $\mathbb E[w^+(x,y)] \le (1 + \gamma) w(x,y)$ for some fixed $\gamma > 0$.
Note that the weights given by this random family to different pairs might be correlated with each other.

\begin{definition}[Crossing edge]
	For a partition $(C_1, \dots, C_t)$ of $S$ we say that a pair of points $(x,y)$ \textit{crosses} this partition if $x \in C_i$ and $y \in C_j$ for $i \neq j$.
\end{definition}
\begin{definition}[Cut-preserving spanning tree]
	We say that $\mathcal T$ is an $\alpha$-cut-preserving spanning tree for $w: S \times S \rightarrow \mathbb R^+$ if for every partition $(C_1,C_2)$ of $S$ there exists an edge in $\mathcal T$ that crosses this partition and is  at most $\alpha$ times longer than the shortest such edge with respect to $w$.
\end{definition}

As we show below Algorithm~\ref{alg:partition-mst} can be seen as performing the following experiment: draw $k$ functions $w_1, \dots, w_k$ i.i.d at random from the family $w^+$.
Compute a $(1 + \delta)$-cut-preserving spanning tree $\mathcal T_i$ for each $w_i$.
Then for each $(x,y) \in S \times S$ define $w'_i(x,y) = w(x,y)$ if $(x,y)$ is in this spanning tree and $w'_i(x,y) = +\infty$ otherwise.
Then for all $(x,y) \in S \times S$ define $\bar w^k(x,y) = \min_{i = 1}^k w'_i(x,y)$.
The final run of Boruvka's MST algorithm is then executed on $\bar w^k$.

Indeed, random family of functions $w^+$ satisfying the properties described above is constructed by Algoirthm~\ref{alg:partition-mst} as follows from a result~\cite{ANOY14a} given. It is important to note that cut-preserving spanning tree computations for random function samples from this family required above can be also performed as guaranteed by the following lemma:

\begin{lemma}[\cite{ANOY14a}, Lemma 3.4 and Lemma 3.13]\label{lem:anoy-metric}
	Given access to an $(a,b,c)$-distance-preserving partition with $L$ levels and approximation $\gamma$ for $M(S,\rho)$  there exists an MPC algorithm that runs in $O(1)$ rounds and constructs a random family of weight functions $w_P$ which satisfies: 
	$$\rho(i,j) \le w_P(i,j) \text { and } \mathbb E[w_P(i,j)] \le \left(1 + c_1 \epsilon L b\right) \rho(i,j).$$
	Furthermore, execution of unit step Algorithm~\ref{alg:unit-step} for all cells in this partition for a random function $w^*$ sampled from $w_P$ produces a $(1+c_2\epsilon)$-cut-preserving spanning tree $\mathcal T$ for $w^*$. 
	
\end{lemma}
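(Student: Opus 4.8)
The plan is to verify the two assertions separately, treating the $(a,b,c)$-distance-preserving partition abstractly: a random $L$-level hierarchical decomposition of $S$ in which a level-$\ell$ cell splits into at most $c$ children, successive cell diameters $\Delta_\ell$ shrink geometrically by a factor $a$, and the defining ``distance-preservation'' guarantee bounds, for any two points at distance $\rho(x,y)$, the probability that they are separated at a level of cell-diameter $\Delta$ by $b\cdot \rho(x,y)/\Delta$. Given such a partition I would define the surrogate weight $w_P(i,j)$ to be the length of the connection that the bottom-up net-merging scheme of Algorithm~\ref{alg:unit-step} realises between $i$ and $j$: at every level each surviving component is represented by a single net point at scale proportional to the current cell diameter, so the realised connection is the true distance plus the detours introduced by snapping both endpoints to net representatives at each level where they share a cell. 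Domination $\rho(i,j)\le w_P(i,j)$ then follows immediately, since $w_P(i,j)$ is the length of an actual $i$--$j$ path in the metric and hence is at least $\rho(i,j)$ by the triangle inequality.

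For the expectation bound I would write the overhead $w_P(i,j)-\rho(i,j)$ as a sum over levels: at a level $\ell$ the surrogate pays an additional net-routing detour of order $\epsilon\Delta_\ell$ only when $i$ and $j$ are first split there, and the distance-preservation guarantee bounds the probability of a split at cell-diameter $\Delta_\ell$ by $b\,\rho(i,j)/\Delta_\ell$. Hence each level contributes at most $O(\epsilon b)\,\rho(i,j)$ to the expected overhead, and summing over the at most $L$ relevant levels yields $\mathbb{E}[w_P(i,j)] \le (1 + c_1\epsilon L b)\,\rho(i,j)$ for an absolute constant $c_1$; this is Lemma~3.4 of~\cite{ANOY14a} transcribed to the abstract $(a,b,c)$ partition, and the geometric decay of the $\Delta_\ell$ with ratio $a$ is what keeps the per-level detours summable rather than blowing up.

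The second assertion I would establish deterministically for each fixed sample $w^*$ (equivalently, each fixed random shift). Fix any partition $(C_1,C_2)$ of $S$ and let $e=(u,v)$ be a shortest crossing edge with respect to $w^*$. Running the unit step bottom-up, the two sides stay disconnected until the first level $\ell^\ast$ at which a component touching $C_1$ is merged with one touching $C_2$; the edge $f$ performing this merge crosses the cut and lies in $\mathcal T$, so it suffices to bound $w^*(f)\le(1+c_2\epsilon)\,w^*(e)$. I would do this by the two-case argument sketched in the main text: either $u$ and $v$ are already joined at a lower level, in which case the joining edge has $w^*$-length at most $(1+c_2\epsilon)w^*(u,v)$ because longer edges have not yet been considered; or they are joined at level $\ell^\ast$, where the merge uses the distance between the net representatives $u',v'$ of $u,v$, so its length exceeds $w^*(u,v)$ by at most the two-sided snapping error $2\epsilon^2\Delta_{\ell^\ast-1}$, and since edges of length $(1+c_2\epsilon)w^*(u,v)$ were not processed one level earlier we have $\epsilon\Delta_{\ell^\ast-1} < (1+c_2\epsilon)w^*(u,v)$, which combine to give $w^*(f)\le(1+c_2\epsilon)w^*(u,v)$. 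As $f\in\mathcal T$ crosses $(C_1,C_2)$, the tree is $(1+c_2\epsilon)$-cut-preserving for $w^*$, recovering Lemma~3.13 of~\cite{ANOY14a}.

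The main obstacle is the second case of the cut-preserving argument: one must track the net-snapping errors accumulated level by level and show they telescope into a purely $O(\epsilon)$ multiplicative overhead on the shortest crossing edge, which is exactly the point where the geometric cell-diameter schedule (ratio $a$) and the net scale $\epsilon$ must be balanced against the number of levels $L$. A secondary point requiring care is the $O(1)$-round MPC realisability together with the local-space accounting: an entire level can be processed in parallel only if each cell, together with the nets of its children, fits in a single machine's memory, which is what the branching bound $c = s^{\alpha}$ and the side-length schedule are chosen to guarantee.
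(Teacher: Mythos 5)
This lemma is not proved in the paper at all: it is imported verbatim, as the bracketed attribution indicates, from Lemmas 3.4 and 3.13 of \cite{ANOY14a}, so there is no in-paper proof to compare against (though a sketch of the cut-preservation argument survives, commented out, in the paper's source). Your blind reconstruction follows essentially the argument of that cited source and is correct in outline: domination because $w_P$ is witnessed by an actual path, the expectation bound by charging each of the $L$ levels an expected surcharge of $O(\epsilon b)\,\rho(i,j)$ via the cutting probability $b\,\rho(i,j)/\Delta_\ell$, and cut preservation deterministically per sample via the first-merge-across-the-cut argument with the $2\epsilon^2\Delta_{\ell}$ net-snapping error absorbed using $\epsilon\Delta_{\ell} < (1+c_2\epsilon)\,w^*(u,v)$. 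One caveat worth flagging: in \cite{ANOY14a} the family $w_P$ is defined explicitly as $\rho(i,j)$ plus an $\epsilon\Delta_\ell$ surcharge tied to the first level at which the algorithm could connect $i$ and $j$, not as your ``realized connection length'' (which risks circularity with the tree the algorithm builds); this distinction is exactly what rescues your second case when the first merge crossing $(C_1,C_2)$ happens at a level where $u$ and $v$ do not yet share a cell, since the merging edge must there be compared against $w^*(u,v)$ carrying the surcharge rather than against $\rho(u,v)$.
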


Let $w(i,j) = \|v_i - v_j\|_2$, $w^+ = w_P$ and let $\gamma = c_1 \epsilon d$ and $\delta = c_2 \epsilon$.

\begin{lemma}\label{lem:mst-edge-approx}
	Let $n = |S|$. There is a large enough constant $c >0$ such that if $k = c \log n$ then for all $i$ it holds that:
	$$\Pr_{w_1, \dots, w_k}[MST_i(\bar w_k) \le (1 + 2 \gamma)(1 + \delta) MST_i(w)] \ge 1 - 1/poly(n).$$
\end{lemma}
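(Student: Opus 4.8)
The plan is to reduce the per-edge MST guarantee to a statement about connected components at every distance threshold, and then to exploit the fact that the sparsifier $\bar w^k$ is assembled from $k$ \emph{spanning} trees, each cut-preserving for its own sampled weight function. I will use the standard threshold characterization of MST edge weights: for any weight function $u$ on a connected $n$-vertex graph, $MST_i(u) \le \tau$ holds iff the subgraph keeping only edges of weight at most $\tau$ has at most $n-i$ connected components. Write $H$ for the graph $\bigcup_j \mathcal T_j$ equipped with the \emph{true} weights $w$, so that $MST_i(\bar w^k) = MST^w_i(H)$ (the $+\infty$ edges are never used and $\mathcal T_1$ already spans). Set $\beta := (1+2\gamma)(1+\delta)$ and $t := MST_i(w)$. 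Since $MST_i(w) \le t$ gives $\mathrm{cc}(w,t) \le n-i$, it suffices to prove the \emph{component-containment} property: every connected component of the true graph at threshold $t$ lies inside a single component of $H^w_{\le\beta t}$ (the subgraph of $H$-edges with true weight $\le \beta t$). Indeed, containment forces the $H^w_{\le\beta t}$-components to be unions of true-$t$ components, hence $\mathrm{cc}(H^w_{\le\beta t}) \le \mathrm{cc}(w,t) \le n-i$, which is $MST^w_i(H) \le \beta t$ by the characterization.

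The only probabilistic input is a single event $\mathcal A$. For each of the $n-1$ true MST edges $e_a$ and each sample $j$, the hypotheses $w \le w_j$ and $\mathbb E[w_j(e_a)] \le (1+\gamma)w(e_a)$ give, via Markov applied to the nonnegative variable $w_j(e_a)-w(e_a)$, that $\Pr[w_j(e_a) \le (1+2\gamma)w(e_a)] \ge 1/2$. Taking $k = c\log n$ i.i.d.\ samples and union-bounding over the $n-1$ MST edges, with probability $1 - n^{-\Omega(c)}$ the event $\mathcal A$ holds: for every MST edge $e_a$ there is a sample $j(a)$ with $w_{j(a)}(e_a) \le (1+2\gamma)w(e_a)$. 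Everything after conditioning on $\mathcal A$ is deterministic, so this one event handles all $i$ simultaneously.

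The core is a contradiction establishing component-containment from $\mathcal A$. Suppose some true-$t$ component $C$ were split by $H^w_{\le\beta t}$; then there is a cut $(P, S\setminus P)$ with $P$ an $H^w_{\le\beta t}$-component meeting but not containing $C$, across which no $H$-edge of true weight $\le \beta t$ passes. Because $C$ is connected at threshold $t$ and the true MST edges of weight $\le t$ span each true-$t$ component, some MST edge $e_a$ with $w(e_a)\le t$ crosses $(P,S\setminus P)$. Now invoke the sample $j(a)$ from $\mathcal A$: the tree $\mathcal T_{j(a)}$ is \emph{spanning}, hence crosses $(P,S\setminus P)$, and being $(1+\delta)$-cut-preserving for $w_{j(a)}$ it does so with an edge $f$ satisfying $w_{j(a)}(f) \le (1+\delta)\,w_{j(a)}(e_a) \le (1+\delta)(1+2\gamma)t = \beta t$, using that $e_a$ is itself an edge crossing the cut. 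Then $w(f) \le w_{j(a)}(f) \le \beta t$ and $f \in \mathcal T_{j(a)} \subseteq H$, so $f$ is an $H$-edge of true weight $\le \beta t$ crossing $(P,S\setminus P)$ — contradicting the choice of $P$. Thus no true-$t$ component is split, giving the claim for this $i$ and, since $\mathcal A$ is the only randomness, for all $i$ at once.

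I expect the main obstacle to be precisely the subtlety that dictates this argument. A crossing edge that only \emph{approximates} the minimum across a fixed cut may attach to the wrong vertices (it can ``dodge''), so per-cut guarantees for any static family of cuts, even the MST or Kruskal-block cuts, fail to compose into connectivity; likewise the clean single-tree bound $MST_i(\bar w^k) \le (1+\delta)\min_j MST_i(w_j)$ is true but too lossy, since inflated weights can blow up $\min_j MST_i(w_j)$ while the true weights carried on the trees stay small. The escape is that each $\mathcal T_j$ supplies a good crossing edge for \emph{every} cut with respect to $w_j$ — in particular for the adversarially chosen $(P,S\setminus P)$ — so we only need each of the $n-1$ MST edges to be un-inflated in \emph{some} sample, which is exactly what the cheap union bound defining $\mathcal A$ delivers.
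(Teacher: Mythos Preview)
Your proof is correct, and it follows the same core mechanism as the paper---Markov on $w^+(e)-w(e)$ to find a good sample, then the $(1+\delta)$-cut-preserving property to extract a short $H$-edge across the relevant cut---but the organization differs in two ways worth noting. First, the paper union-bounds over all $n^2$ pairs to obtain an event $\mathcal E$ under which \emph{every} pair has a good sample, and from this proves a standalone proposition: for \emph{any} partition $(C_1,\dots,C_t)$ the shortest $w$-crossing edge is $\beta$-approximated by some $\bar w^k$-edge. It then runs approximate Kruskal step-by-step on $\bar w^k$, using a pigeonhole (forest) argument to find a true MST edge $e_{i^*}$, $i^*\le i$, crossing the current Kruskal partition. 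Your argument is more parsimonious: you union-bound only over the $n-1$ true MST edges, use the threshold characterization $MST_i(u)\le\tau \Leftrightarrow \mathrm{cc}(u,\tau)\le n-i$ to reduce to component containment, and then let the cut $(P,S\setminus P)$ be chosen adversarially by the hypothetical split---at which point the crossing MST edge comes for free. The payoff is a slightly tighter probability bound and a proof that avoids tracking Kruskal's intermediate state; the paper's version, in exchange, isolates a reusable partition-level proposition. Both routes bottom out in the same inequality chain $w(f)\le w_{j(a)}(f)\le(1+\delta)w_{j(a)}(e_a)\le\beta t$.
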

\begin{proof}
	Fix $(x,y) \in S \times S$ and let $\Delta(x,y) = w^+(x,y) - w(x,y)$.
	Because $\Delta(x,y) \ge 0$ and $\mathbb E[\Delta(x,y)] \le \gamma w(x,y)$ with probability at least $1/2$ it holds that $\Delta(x,y) \le 2 \gamma w(x,y)$ by Markov inequality. If $k = c \log n$ then with probability $1 - 1/n^c$ there exists $i$ such that $w_i(x,y) - w(x,y) \le 2 \gamma w(x,y)$. By a union bound over all $n^2$ pairs $(x,y)$ with probability $1 - 1/n^{c-2}$ for each such pair a corresponding index exists. Below we refer to this event as $\mathcal E$ and condition on it.

	\begin{proposition}\label{prop:shortest-edge-approx}
		Let $(C_1, \dots, C_t)$ be an arbitrary partition of $S$.
		Let $(x^*,y^*) \in S \times S$ be the closest w.r.t $w$ pair of points that belong to different parts of this partition.
		Then conditioned on the event $\mathcal E$ there exists a pair of points $(x',y')$ that crosses this partition and:
		$$w(x^*,y^*) \le \bar w^k(x',y') \le (1 + 2\gamma) (1 + \delta) w(x^*, y^*).$$ 
	\end{proposition}
	\begin{proof}
		First, consider the case when $t = 2$ and consider any partition $(C_1, C_2)$ of $S$.
		Let $(x^*,y^*)$ be the shortest edge that crosses this partition, i.e. $(x^*, y^*) := arg\min_{x \in C_1, y \in C_2} w(x,y)$.
		Conditioned on $\mathcal E$ there exists $i$ such that $w_i(x^*,y^*) \le (1 + 2\gamma) w(x^*,y^*)$.
		Furthermore, there exists an edge $(x',y')$ in the $(1 + \delta)$-cut-preserving spanning tree $\mathcal T_i$ constructed for $w_i$ that has length $w'_i(x',y') = w_i(x',y') \le (1 + \delta) w_i(x^*,y^*) \le (1 + 2\gamma)(1 + \delta) w(x^*,y^*)$.
		On the other hand, because $w_i \ge w$ for every pair $(x,y)$ that crosses the partition $(C_1,C_2)$ it holds that $w_i(x,y) \ge w(x^*,y^*)$.
		Combining these two facts we conclude that in $\mathcal T_i$ there exists some edge $(x', y')$ that crosses the cut and satisfies $w(x^*,y^*) \le w'_i(x',y') \le (1 + 2\gamma)(1 + \delta)w(x^*,y^*)$. By definition of $\bar w^k$ the same holds for it as well, i.e. $w(x^*,y^*) \le \bar w^k(x',y') \le (1 + 2\gamma)(1 + \delta)w(x^*,y^*)$.
		
		Now suppose $t > 2$.
		For $i = 1, \dots, t$ define a family of cuts $(S_i, T_i)$ where $S_i = C_i$ and $T_i = \cup_{j \neq i} C_j$. Let $(x^*_i, y^*_i)$ be the shortest pair crossing the cut $(S_i, T_i)$.
		If $(x^*,y^*)$ is the shortest edge that crosses $(C_1, \dots, C_t)$ then  we have $w(x^*,y^*) = \min_i w(x^*_i, y^*_i)$. Let $i^* = arg \min_i w(x^*_i, y^*_i)$.
		Then using the argument above for $t = 2$ there exists $(x',y')$ such that $x' \in S_{i^*}, y \in T_{i^*}$ and:
		\begin{align*}
		w(x^*,y^*) & = w(x^*_{i^*}, y^*_{i^*}) \\
		&\le \bar w^k(x',y') \\
		&\le (1 + 2\gamma)(1 + \delta)w(x^*_{i^*}, y^*_{i^*}) \\
		&= (1 + 2\gamma)(1 + \delta)w(x^*,y^*).\qedhere
		\end{align*}
		
	\end{proof}
	
	Given Proposition~\ref{prop:shortest-edge-approx} the rest of the proof is the same as analysis of approximate Kruskal's algorithm in~\cite{I00}, we give the proof here for completeness.
	Since edges output by Kruskal's algorithm are produced in the order of non-decreasing weight $MST_i$ is the $i$-th edge that is output.
	Consider executions of Kruskal's algorithm on weights $w$ and $\bar w^k$.
	Let the edges output by the former execution be $e_1, \dots, e_{n-1}$ in order.
	Let the edges output by the latter execution be $e'_1, \dots, e'_{n-1}$. 
	
	To prove Lemma~\ref{lem:mst-edge-approx} it suffices to show that conditioned on $\mathcal E$ it holds that $w(e_i) \le w(e'_i) \le (1 + 2\gamma)w(e_i)$ for all $i$. The first inequality here essentially follows from the fact that the weight of the $i$-th MST edge is a monotone function of the weights and $w \le \bar w^k$.
	
	The $i$-th edge in Kruskal's algorithm is constructed by joining two closest clusters among $n -i + 1$ clusters constructed so far. Let these clusters  in the execution of Kruskal's algorithm on $\bar w^k$ be denoted as $C_1, \dots, C_{n- i + 1}$.
	The key observation is that there exists an index $i^* \le i$ such that endpoints of the edge $e_{i^*}$ belong to different parts of the partition $C_1, \dots, C_{n-i + 1}$. Indeed, edges $e_1, \dots, e_i$ form a forest and thus having all such edges be inside $C_1, \dots, C_{n-i + 1}$ would be a contradiction.
	
	Let $(x^*,y^*)$ be the closest w.r.t to $w$ pair of points in different parts of the partition $C_1, \dots, C_{n - i + 1}$.
	By applying Proposition~\ref{prop:shortest-edge-approx} to $e_{i^*}$ there exists a pair of points $(x',y')$ whose endpoints belong to different parts of the partition $C_1, \dots, C_{n - i + 1}$ and $\bar w^k(x',y') \le (1 + 2\gamma) w(x^*, y^*)$.
	Putting everything together we have:
	
	\begin{align*}
	w(e_i') & \le \bar w^k(e'_i) && \text{$w \le \bar w^k$} \\ 
	&\le \bar w^k(x',y') \\ 
	&\le (1 + 2\gamma)(1 + \delta) w(x^*, y^*) && \text{Proposition~\ref{prop:shortest-edge-approx}}\\
	&\le (1 + 2\gamma)(1 + \delta) w(e_{i^*}) &&\text{$e_{i*}$ crosses $(C_1, \dots, C_{n - i + 1})$}\\
	&\le (1 + 2\gamma)(1 + \delta) w(e_i) && \qedhere
	\end{align*}
	
	The second inequality follows because $e'_i$ is shortest edge w.r.t $\bar w^k$ that crosses  $(C_1, \dots, C_{n - i + 1})$. The last inequality follows because $i^* \le i$, edge weights are non-decreasing.

\end{proof}

Putting everything together we obtain analysis of approximation guaranteed by Algorithm~\ref{alg:partition-mst}.

\begin{theorem}\label{thm:mst-approximation}
	For $\eta \le 3$ and $p = 1, 2, \infty$ Algorithm~\ref{alg:partition-mst} constructs a spanning tree for $w(i,j) = \|v_i - v_j\|_p$ for each $t$ its $t$-th longest edge $(x,y)$ has weight $w(x,y) \le (1 + \eta)MST_k(w)$. This guarantee holds with high probability over the randomness used in Algorithm~\ref{alg:partition-mst}.
\end{theorem}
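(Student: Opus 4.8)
The plan is to obtain Theorem~\ref{thm:mst-approximation} almost entirely from Lemma~\ref{lem:mst-edge-approx}, so that the remaining work is only (i) checking that Algorithm~\ref{alg:partition-mst} genuinely realizes the abstract experiment on $\bar w^k$ to which that lemma applies, and (ii) verifying that the parameter choices in Algorithm~\ref{alg:partition-mst} turn the factor $(1+2\gamma)(1+\delta)$ into $1+\eta$.

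First I would justify the identification of the algorithm with the experiment. Each of the $k = c\log n$ sequential repetitions samples a partition $P$ from the $(a,b,c)$-distance-preserving family and runs the unit step Algorithm~\ref{alg:unit-step}; by Lemma~\ref{lem:anoy-metric} this is exactly one draw $w_i$ from $w^+ = w_P$ together with the construction of a $(1+c_2\epsilon)$-cut-preserving spanning tree $\mathcal T_i$ for $w_i$. The collected edge set $E = \bigcup_i \mathcal T_i$ is therefore precisely the (finite) support of $\bar w^k$, and since $\bar w^k$ agrees with the true weight $w$ on this support and is $+\infty$ off it, the exact Boruvka run on $G=(V,E)$ in line~\ref{step:boruvka} returns the same tree as an exact MST computation for $\bar w^k$. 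Consequently the output edges, listed in non-decreasing $w$-order as $e'_1,\dots,e'_{n-1}$, satisfy $w(e'_i)=MST_i(\bar w^k)$, and Lemma~\ref{lem:mst-edge-approx} (with $c$ chosen large enough for the high-probability guarantee) gives, simultaneously for all $i$,
$$w(e'_i) = MST_i(\bar w^k) \le (1+2\gamma)(1+\delta)\,MST_i(w).$$
I would note that this step is metric-agnostic: for each $p \in \{1,2,\infty\}$ it uses only the existence of an $(a,b,c)$-distance-preserving family, whose construction is deferred to Section~\ref{sec:implementation}.

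Next I would convert the constant into $1+\eta$. By Lemma~\ref{lem:anoy-metric} the family obeys $\mathbb E[w_P(i,j)] \le (1 + c_1\epsilon L b)\rho(i,j)$, so $\gamma = c_1\epsilon L b$ and $\delta = c_2\epsilon$. The setting $\epsilon = \min\!\left(\frac{\eta}{6c_1 L b},\ \frac{\eta}{3c_2}\right)$ then yields $2\gamma \le \eta/3$ and $\delta \le \eta/3$, so that
$$(1+2\gamma)(1+\delta) \le \left(1+\tfrac{\eta}{3}\right)^2 = 1 + \tfrac{2\eta}{3} + \tfrac{\eta^2}{9}.$$
Here the hypothesis $\eta \le 3$ enters: it is precisely the range in which $\tfrac{\eta^2}{9} \le \tfrac{\eta}{3}$, so the right-hand side is at most $1+\eta$. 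It remains to pass from the non-decreasing ranking of Lemma~\ref{lem:mst-edge-approx} to the ``$t$-th longest'' ranking of the statement; since a spanning tree has $n-1$ edges, the $t$-th longest output edge is $e'_{n-t}$ and $MST_{n-t}(w)$ is the $t$-th longest optimum MST edge, giving $w(e'_{n-t}) \le (1+\eta)\,MST_{n-t}(w)$, which is the desired bound (and exactly what feeds the $k$-SLC guarantee when the $k-1$ longest edges are removed).

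The proof is thus mainly bookkeeping. The one genuinely delicate point is the first: one must be certain that the exact MST on the union of the sampled trees coincides with the exact MST of $\bar w^k$, since only then does the clean rank-by-rank bound of Lemma~\ref{lem:mst-edge-approx} transfer verbatim to the edges of the actual output measured under $w$. The remaining obstacle is merely the elementary arithmetic that ties the feasibility of $\left(1+\tfrac{\eta}{3}\right)^2 \le 1+\eta$ to the stated constraint $\eta \le 3$.
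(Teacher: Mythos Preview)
Your proposal is correct and follows essentially the same route as the paper: invoke Lemma~\ref{lem:anoy-metric} to place Algorithm~\ref{alg:partition-mst} in the setting of Lemma~\ref{lem:mst-edge-approx}, then use the choice of $\epsilon$ to bound $(1+2\gamma)(1+\delta)\le(1+\eta/3)^2\le 1+\eta$ for $\eta\le 3$. Your write-up is in fact slightly more careful than the paper's, both in explicitly arguing that Boruvka on $E=\bigcup_i\mathcal T_i$ computes the MST of $\bar w^k$, and in correctly pairing $\gamma$ with $c_1\epsilon Lb$ and $\delta$ with $c_2\epsilon$ (the paper swaps these in the displayed assignment, a harmless typo since either way one lands at $(1+\eta/3)^2$).
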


\begin{proof}
	Note that taking $w^+ = w_P$ for $w(i,j) = \|v_i - v_j\|_p$ where $p = 1, 2, \infty$ satisfies conditions of Lemma~\ref{lem:mst-edge-approx} by Lemma~\ref{lem:anoy-metric}. Hence our algorithm constructs a function $\bar w_k$ with properties required for Lemma~\ref{lem:mst-edge-approx}.
	Since $c_1 \epsilon L b \le \eta/6$ and $c_2 \epsilon \le \eta/3$ we can set $\delta = \eta/6$ and $\gamma = \eta/3$ in Lemma~\ref{lem:mst-edge-approx} and hence for $\eta \le 3$:
	$$\Pr\left[\mathcal{E}_{1}\right] \ge \Pr\left[\mathcal{E}_{2}\right] \ge 1 - \frac{1}{poly(n)}.$$
	where $\mathcal{E}_{1}$ is the event that $MST_i(\bar w_k) \le (1 + \eta) MST_i(w)$	and $\mathcal{E}_{2}$ is the event that $MST_i(\bar w_k) \le (1 + 2 \gamma)(1 + \delta) MST_i(w)$.
	
	After $\bar w_k$ is constructed by running Boruvka's algorithm on it we find an MST exactly and hence the approximation guarantee for each of the MST edges follows.
\end{proof}

\subsection{Exact Hamming MST}\label{sec:exact-hamming-mst}

\begin{theorem}\label{thm:hamming-k-slc}
	For $d = O(1)$ Hamming MST can be computed exactly in $O(\log n)$ rounds of MPC.
\end{theorem}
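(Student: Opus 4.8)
The plan is to build a sparse auxiliary graph $H$ with only $O(n)$ edges whose minimum spanning tree (under true Hamming weights) coincides exactly with that of the complete graph $G$ on $v_1,\dots,v_n$, and then invoke the folklore $O(\log n)$-round MPC simulation of Boruvka's algorithm on $H$. The entire gain comes from two observations specific to $\ell_0$ with $d = O(1)$: Hamming distances take only the $d+1$ integer values $0,1,\dots,d$, and for any fixed set $S$ of coordinates the relation ``$u$ and $v$ agree on all coordinates of $S$'' is an \emph{equivalence relation}, so its graph is a disjoint union of cliques.

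Concretely, for each nonempty subset $S \subseteq \{1,\dots,d\}$ I would sort the points lexicographically by their projection onto $S$; points sharing an $S$-projection then occupy a contiguous block, and I connect each consecutive pair within a block by an edge labeled with the true Hamming distance of its endpoints (computable locally since $d=O(1)$). This path spans each agreement-clique using only a linear number of edges, and sorting together with the scan of consecutive equal keys costs $O(1)$ MPC rounds by~\cite{GSZ11}. Taking $H$ to be the union of these spanners over all $2^d-1 = O(1)$ subsets yields $|E(H)| = O(n)$, well within the total space budget $\tilde O(n)$.

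The key correctness claim is that $H$ preserves the single-linkage hierarchy, i.e.\ for every threshold $\tau \in \{0,\dots,d\}$ the connected components of $H_{\le \tau}$ (the subgraph of $H$ on edges of weight at most $\tau$) equal those of $G_{\le\tau}$. For the nontrivial direction, a pair at Hamming distance at most $\tau$ agrees on at least $d-\tau$ coordinates, hence agrees on some size-$(d-\tau)$ subset $S$, and is therefore connected inside the $S$-clique spanner by edges of weight at most $\tau$; the reverse containment $H_{\le\tau}\subseteq G_{\le\tau}$ is immediate since every edge of $H$ carries its true weight. Since the threshold-component hierarchies agree at every level, the cut structure of the two graphs is identical, so $H$ contains a minimum spanning tree of $G$ and $\mathrm{MST}(H)=\mathrm{MST}(G)$ exactly. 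Equivalently, this matches the stated technique of running the $O(\log n)$-round connectivity algorithm once per threshold $\tau=1,\dots,d$ to extract the component hierarchy directly, and then assembling weight-$\tau$ witness edges between components that merge at level $\tau$.

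Finally I would run the $O(\log n)$-round Boruvka-in-MPC simulation on the sparse graph $H$ to obtain the exact MST (which then yields an exact $k$-SLC for every $k$ by removing the $k-1$ longest edges). The step I expect to require the most care is the correctness argument of the previous paragraph: verifying that the clique-spanner union reproduces the component hierarchy at \emph{every} threshold simultaneously, and that labeling edges by their true weights prevents any agreement-on-a-small-subset edge from short-cutting and distorting a lower-level cut. The MPC resource accounting — the $O(\log n)$ round count, the per-machine space $s = n^{\alpha}$, and the $O(n)$ bound on $|E(H)|$ — is then routine given the cited sorting and sparse-MST primitives.
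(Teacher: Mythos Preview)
Your approach is essentially identical to the paper's: both sort the input once per coordinate subset, connect consecutive equal-projection pairs, argue that the resulting sparse graph has the same threshold-component hierarchy as the full Hamming graph, and finish with an $O(\log n)$-round sparse MST/connectivity routine. The only cosmetic difference is that you label each spanner edge with its true Hamming distance while the paper labels it with the upper bound $d-|S|$; either choice yields matching hierarchies.

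There is, however, one boundary case that breaks your construction as stated. By ranging over \emph{nonempty} subsets $S$ only, every edge you insert into $H$ joins two points that agree on at least one coordinate, so no edge of $H$ can have Hamming weight $d$. Hence if the input contains components that are pairwise at distance exactly $d$ (for instance two points in dimension $d=1$ with different values, or more generally any instance where $G_{\le d-1}$ is disconnected), your graph $H$ is disconnected and Boruvka on $H$ fails to return a spanning tree at all. Your hierarchy-matching argument correspondingly fails at $\tau=d$, since a pair at distance $d$ agrees on a size-$0$ subset, which you have excluded. The paper sidesteps this by also taking $b=0^d$, which adds a Hamiltonian path of weight-$d$ edges through all the points; with that one-line fix (or, equivalently, a final pass that links the surviving $(d{-}1)$-threshold components with arbitrary weight-$d$ edges) your proof goes through and matches the paper's.
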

\begin{proof}
	For the special case $d = 2$ the algorithm is particularly simple and is given in Section~\ref{sec:2d-hamming-simple}.
	For $d = O(1)$ we construct an auxiliary graph with integer edge weights in the interval $[1, \dots, d]$ and use MPC algorithm graph connectivity to compute MST for it.
	
	For a binary vector $b \in \{0,1\}^d$ of Hamming weight $t$ and $v \in \mathbb R^d$ we use notation $v(b)$ to denote a vector in $\mathbb R^t$ consisting of coordinates of $v$ corresponding to non-zero values of $b$.
	For a binary vector $b \in \{0,1\}^d$ we define an order relation $\le_b$ on vectors in $\mathbb R^d$ as follows: $v_1 \le_b v_2$ if and only if $v_1(b) \le v_2(b)$ and $\le$ is the lexicographic order. 
	\begin{itemize}
		\item Initialize $G$ to an empty graph on $n$ vertices and $\mathcal F$ to a forest of singleton vertices.
		\item For each binary vector $b \in \{0,1\}^d$:
		\begin{itemize}
			\item Sort input vectors $v_1, \dots v_n$ according to $\le_b$ breaking ties arbitrarily.
			\item Let $\pi(i)$ be the index of the $i$-th vector in this sorted order.
			\item For $i = 1, \dots, n - 1$ create an edge in $G$ of weight $d - \|b\|_0$ between vertices $\pi(i)$ and $\pi(i + 1)$ if $v_{\pi(i)}(b) = v_{\pi(i + 1)}(b)$
		\end{itemize}
		\item For $i = 1, \dots, d$
		\begin{itemize}
			\item Augment $\mathcal F$ using edges of length $i$ in $G$ to a spanning forest for the subgraph of $G$ consisting of edges of weight at most $i$.
		\end{itemize}
		
	\end{itemize}
	
	The first loop of the above reduction can be performed in $O(1)$ rounds of MPC by replicating the data $2^d = O(1)$ times and running $O(1)$-round MPC sorting algorithm~\cite{GSZ11} on all the replicas in parallel. The second loop can be performed in $O(d \log n)$ rounds total by using $O(\log n)$ connectivity algorithm in each iteration.
	Correctness of the above algorithm follows from the following observation:  if $\|v_i - v_j\|_0 = t$ then there exists a path in the graph $G$ between $i$ and $j$ that uses only edges of weight at most $t$.

	Indeed, if Hamming distance between two vectors equals $t$ then there exists a subset of $d - t$ coordinates where these two vectors agree. Let $b \in \{0,1\}^d$ be the indicator vector of this subset. In the iteration of the first loop corresponding to $b$ let $p_i$ and $p_j$ be positions of vectors $v_i$ and $v_j$ in the sorted order in this iteration. W.l.o.g $p_i < p_j$ and for all $k = p_i, \dots, p_j - 1$ we added an edge of weight $t$ between $\pi(k)$ and $\pi(k + 1)$ creating the desired path in $G$.
	
	From the above observation it follows directly that for all $t =1, \dots, d$ the number of connected components in the subgraph of $G$ induced by edges of weight at most $t$ is the same as the number of connected components induced by edges of Hamming weight at most $t$ in the original input. Thus executions of Kruskal's algorithm on $G$ and the distance graph under Hamming distance give the same result and hence the MST constructed by the algorithm above is optimal.

	\subsubsection{Simple proof of Theorem~\ref{thm:hamming-k-slc} for $d = 2$}\label{sec:2d-hamming-simple}
	
	An instance of Hamming MST for $d = 2$ is represented by $n$ vectors $(x_1, y_1), \dots, (x_n, y_n)$.
	Because all edges in the distance graph have cost either $1$ or $2$ the cost of the optimum MST equals to $n + c - 2$ where $c$ is the number of connected components in the subgraph induced by edges of cost $1$.
	We will construct a subgraph that has the same set of connected components using only $2n$ edges and then run an $O(\log n)$-round MPC connectivity algorithm on it.
	
	Formally the construction is given as follows. First, we create a vertex $i$ for each input vector $(x_i, y_i)$.
	Then we create edges between these vertices as described below, repeating this process with the role of $x$ and $y$ coordinates flipped (i.e. sort and group according to $y$). 
	\begin{itemize}
		\item Sort input vectors according to the $x$-coordinate and then according to the $y$-coordinate.
		\item For each value of $x$ let $y^x_1, \dots, y^x_t$ be the corresponding sorted $y$-coordinate values.
		\item For all $j$  where $1 \le j < t$ create an edge between vertices representing $(x, y^x_j)$ and $(x, y^x_{j + 1})$ in the input graph.
	\end{itemize}
\end{proof}
This reduction can be performed in a constant number of rounds of MPC using a constant-round MPC sorting algorithm of ~\cite{GSZ11}. Furthermore, it preserves the connected components in the graph induced by edges of length $1$ under Hamming distance.
Indeed, such edges correspond to pair of vectors that have one of the coordinates being equal and hence in our construction there is a path between vertices representing such vectors. This completes the proof for the case $d = 2$.

\section{Hardness of $k$-SLC}\label{sec:hardness}
\subsection{Hardness under $\ell_1$ and $\ell_2$}

The following two conjectures are widely used in the MPC literature~\cite{KSV10, BKS13,RMCS13,RVW16}. Note that the second conjecture is stronger and hence can potentially be used to get stronger hardness results.
\begin{conj}[Sparse connectivity hardness]\label{conj:connectivity}
	If $s = n^{\alpha}$ for a constant $\alpha < 1$ then solving connectivity on an input graph with $n$ vertices and $O(n)$ edges requires $\Omega(\log n)$ rounds of MPC.
\end{conj}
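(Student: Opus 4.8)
The final statement is a \emph{conjecture}, i.e.\ a well-known open problem rather than a provable claim, so strictly speaking there is no proof to reconstruct: an unconditional proof of Conjecture~\ref{conj:connectivity} would be a major breakthrough in the theory of massively parallel computation. What I can usefully propose is the natural line of attack one would take and the precise wall it runs into, which is exactly why the paper is forced to \emph{assume} this statement and reduce from it rather than prove it.

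The natural plan is to fix an explicit candidate hard input and argue by round elimination / information spreading. The canonical candidate is the ``one cycle vs.\ two cycles'' family underlying the stronger Conjecture~\ref{conj:cycles}: an $O(n)$-edge graph that is either a single cycle on $n$ vertices or a disjoint union of two cycles of length $n/2$, so that deciding connectivity (indeed, merely counting components) distinguishes the two cases. Known algorithms (Boruvka / pointer-jumping) contract components whose sizes at least double each round and therefore collapse a cycle into one component after $\log_2 n$ rounds; intuitively, certifying that the graph is \emph{one} cycle requires exhibiting a connected path of length $\Theta(n)$, and if each MPC round can enlarge the length of a ``certifiable'' path only by a constant factor then $\Omega(\log n)$ rounds are forced. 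The first thing I would try is to make this ``reach at most doubles per round'' statement rigorous: assign contiguous arcs of the cycle to machines, track for each bit of machine state the set of original vertices it can causally depend on after $r$ rounds, and attempt to bound the growth of this influence set per round, so that resolving the global one-vs-two distinction needs $\Omega(\log n)$ rounds.

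The hard part --- and the reason this remains a conjecture --- is that the ``reach doubles'' intuition is not valid for \emph{unrestricted} MPC algorithms: a machine may transmit arbitrary functions of its local data, the per-machine space $s = n^{\alpha}$ is polynomially large, and nothing a priori forces information to propagate only along short arcs. Roughgarden, Vassilvitskii and Wang~\cite{RVW16} make this obstruction formal by showing that any unconditional $\omega(1)$-round lower bound for a problem such as connectivity in this model would entail strong circuit lower bounds of the flavor $\mathsf{NC}^1 \subsetneq \mathsf{P}$, which is far beyond current techniques. I therefore expect this circuit-complexity barrier to be the genuine obstacle, and the only realistic targets to be (i) \emph{conditional} statements --- assume the conjecture and reduce, which is precisely the route this paper takes for $k$-SLC --- or (ii) \emph{restricted-model} lower bounds, where one constrains the per-machine state (e.g.\ to component-stable or linear-sketch-based algorithms) so that the arc-influence argument above can be pushed through unconditionally, and then argues that the restricted class already captures all known algorithmic techniques. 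Closing the remaining gap between such restricted bounds and the unconditional conjecture is equivalent to proving a major separation, so I would not expect the elementary round-elimination plan to succeed on its own.
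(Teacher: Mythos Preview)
Your assessment is correct: the statement is a \emph{conjecture} in the paper, not a theorem, and the paper offers no proof of it --- it is introduced solely as a standard hardness assumption (alongside Conjecture~\ref{conj:cycles}) from which the $k$-SLC inapproximability results are derived by reduction. Your discussion of why an unconditional proof is out of reach (the \cite{RVW16} barrier) is appropriate context, but there is no paper proof to compare against.
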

\begin{conj}[One cycle vs. two cycles hardness]\label{conj:cycles}
	If $s = n^{\alpha}$ for a constant $\alpha < 1$ then distinguishing the following two instances requires $\Omega(\log n)$ rounds of MPC: 1) a cycle on $n$ vertices, 2) two cycles on $n/2$ vertices each.
\end{conj}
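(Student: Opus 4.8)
The plan is to attack this round lower bound through an \emph{indistinguishability} argument that tracks how fast information about the global topology can propagate across machines. Fix an MPC algorithm running in $r$ rounds with per-machine space $s = n^{\alpha}$ and $m = \tilde O(n^{1-\alpha})$ machines. For a machine $M$ and round $t$, define the \emph{information set} $I_t(M)$ to be the set of input vertices whose initial assignment can influence the state of $M$ after round $t$; this is generated recursively by taking the union of $I_{t-1}(M')$ over all machines $M'$ that send a message to $M$ during round $t$. The goal is to show that if $r = o(\log n)$ then, for the two families (one $n$-cycle versus two $(n/2)$-cycles), there is a coupling of the random machine assignments under which every machine has an identical view in both instances, so no machine can output a bit that distinguishes them, contradicting correctness.

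First I would formalize the locality of views: the answer attributed to any vertex is a function only of the inputs in the information set of the machine reporting it. Next I would argue that distinguishing the two instances is a genuinely global task, since the graphs are locally identical (every radius-$\rho$ neighborhood is a simple path for $\rho = o(n)$), so any distinguishing procedure must, for some vertex, aggregate edge information spanning graph-distance $\Omega(n)$ around a cycle. The heart of the argument would then be an upper bound on how quickly $I_t(M)$ can grow: if one could show that the effective graph-distance diameter of $I_t(M)$ at most doubles per round, then reaching distance $\Omega(n)$ would force $r = \Omega(\log n)$, exactly matching the known $O(\log n)$ connectivity upper bound \cite{RMCS13, KLMRV14}.

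The main obstacle, and the reason the statement is posed as a conjecture rather than a theorem, is precisely that this growth bound fails in the unrestricted model. With all-to-all communication a single machine may receive messages from polynomially many senders in one round, so $I_t(M)$ can acquire a union of far-flung information sets and its effective diameter can blow up super-polynomially rather than by a bounded factor; there is simply no topological locality to exploit, and controlling the combinatorics of arbitrary routing is exactly where every known technique stalls.

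I would therefore view a full proof as proceeding in two realistic stages. In the first stage I would establish the bound in a \emph{restricted} model where routing cannot manufacture global reach, for instance against component-stable or sketching-based algorithms in which each machine's outgoing message depends only on a bounded-radius neighborhood, recovering the coupling and the $\Omega(\log n)$ bound there. The second, decisive stage, removing the restriction, appears to be beyond current methods: as shown by Roughgarden, Vassilvitskii and Wang \cite{RVW16}, any unconditional super-constant MPC round lower bound for an explicit problem in this space regime would imply $\mathsf{NC}^1 \neq \mathsf{P}$. Thus an unconditional proof of Conjecture~\ref{conj:cycles} would resolve a long-standing separation in circuit complexity, which is why it is adopted here as a hardness assumption rather than proved.
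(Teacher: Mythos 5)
The statement you were asked about is Conjecture~\ref{conj:cycles} of the paper, and the paper contains no proof of it: it is adopted as a standard hardness assumption from the MPC literature (citing \cite{KSV10, BKS13, RMCS13, RVW16}) and is used only as the hypothesis in the reductions of Theorem~\ref{thm:l1l2-k-slc-hardness}, which is exactly the status your proposal correctly identifies rather than attempting a bogus unconditional proof. Your supporting account is also accurate — the information-set doubling heuristic matching the $O(\log n)$ connectivity upper bound, its failure under unrestricted all-to-all routing, and the barrier of \cite{RVW16} that any unconditional super-constant round lower bound for an explicit problem with $s = n^{\alpha}$ would separate $\mathsf{NC}^1$ from $\mathsf{P}$ — so your treatment is in full agreement with the paper's.
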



\begin{theorem}\label{thm:l1l2-k-slc-hardness}
	No $o(\log n)$-round MPC algorithm can achieve approximation for $2$-SLC:
	\begin{enumerate}
		\item Better than $(\sqrt{2 + \sqrt{2}}-\epsilon)$ under $\ell_2$ for $d = \Omega(\log n/\epsilon^2)$ under Conjecture~\ref{conj:cycles}.
		\item Better than $3$ under $\ell_1$ for $O(1)$-sparse vectors and $d = \Omega(n)$ under Conjecture~\ref{conj:cycles}.
		\item Better than $(\sqrt{2} - \epsilon)$  under $\ell_2$ for $d = \Omega(\log n/\epsilon^2)$ under Conjecture~\ref{conj:connectivity}.
		\item Better than $2$ under $\ell_1$ for $O(1)$-sparse vectors and $d = \Omega(n)$ under Conjecture~\ref{conj:connectivity}.
	\end{enumerate}
\end{theorem}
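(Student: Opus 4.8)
The plan is to convert any $o(\log n)$-round $2$-SLC approximation into an $o(\log n)$-round algorithm for the conjecturally hard graph problem, following the reduction template from the techniques section. Given an input graph $G$ on vertex set $[n]$, I would embed each vertex $i$ as a vector $v_i = e_i + \xi \sum_{j : (i,j) \in E} e_j$ for a parameter $\xi$ to be optimized. The entire argument rests on one structural fact about the point set $\{v_i\}$: its $2$-SLC objective equals the length of the longest edge of the minimum spanning tree. Indeed, deleting that edge induces a cut across which it is the shortest crossing pair (the MST cut/light-edge property), so this clustering realizes min inter-cluster distance equal to that edge length; and any $2$-clustering is crossed by some MST edge, which is no longer than the longest one, so no clustering does better. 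It therefore suffices to arrange that edges of $G$ map to short distances $d_{\mathrm{close}}(\xi)$ and non-edges to strictly larger distances $\ge d_{\mathrm{far}}(\xi)$: when $G$ is connected the MST completes using only graph-edges and its bottleneck is $d_{\mathrm{close}}$, whereas when $G$ is disconnected at least one bridging non-edge of length $\ge d_{\mathrm{far}}$ is forced in. Any algorithm approximating $2$-SLC within a factor strictly below $d_{\mathrm{far}}/d_{\mathrm{close}}$ would then distinguish the two cases and violate the relevant conjecture.

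Second, I would compute the two distances explicitly and optimize $\xi$. For a $2$-regular instance -- precisely the promised form in the one-cycle-vs-two-cycles problem -- a direct calculation gives, for an edge, $\|v_i - v_j\|_2^2 = 2(1-\xi)^2 + 2\xi^2$ and, for a non-edge with disjoint supports, $\|v_i - v_j\|_2^2 = 2 + 4\xi^2$; the ratio is maximized at $\xi = 1/\sqrt 2$, yielding exactly $\sqrt{2+\sqrt 2}$ and giving Part 1 under Conjecture~\ref{conj:cycles}. The same vectors give $\|v_i - v_j\|_1 = 2$ on edges and $2 + 4\xi$ on non-edges, so taking $\xi \to 1$ yields the factor $3$ of Part 2. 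For the sparse-connectivity reductions (Parts 3 and 4) the graph need not be $2$-regular, so I would first apply standard degree reduction and then use the more robust incidence-style embedding, for which an edge has squared $\ell_2$-length proportional to $\deg i + \deg j - 2$ and a non-edge to $\deg i + \deg j$; on the resulting bounded-degree instances this produces the factors $\sqrt 2$ and $2$ claimed under Conjecture~\ref{conj:connectivity}. Finally, for the two $\ell_2$ statements I would compose the embedding with a Johnson--Lindenstrauss map into $\mathbb{R}^{O(\log n/\epsilon^2)}$: this preserves all pairwise distances up to a $(1\pm\epsilon)$ factor, which both supplies the target dimension $d = \Omega(\log n/\epsilon^2)$ and accounts for the $-\epsilon$ loss; for $\ell_1$ no such reduction is available, so I keep $d = \Omega(n)$, noting that each $v_i$ has $O(1)$ nonzero coordinates once degrees are bounded, so the total input remains $O(n)$ words.

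The main obstacle I anticipate is making the dichotomy ``the MST uses only graph-edges iff $G$ is connected'' genuinely robust. This demands that every non-edge distance strictly exceed every edge distance simultaneously across the whole instance, yet both quantities depend on the endpoint degrees; an edge between two high-degree vertices can otherwise be longer than a non-edge between two low-degree vertices, which would let Kruskal's process insert a bridging non-edge prematurely and collapse the gap. Controlling this is exactly why I would perform degree reduction before embedding in the connectivity reductions, and why the $2$-regular promise of Conjecture~\ref{conj:cycles} is so convenient. The remaining technical points -- checking that pairs sharing a common neighbor (e.g.\ vertices at cycle-distance two) still lie strictly above $d_{\mathrm{close}}$, and that the Johnson--Lindenstrauss distortion does not erode the gap below the claimed constant -- I expect to be routine once $\xi$ and the degree bound are fixed.
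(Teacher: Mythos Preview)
Your treatment of Parts 1 and 2 coincides with the paper's: the same vertex embedding $v_i = e_i + \xi\sum_{j\sim i} e_j$, the same optimization giving $\xi = 1/\sqrt{2}$ for $\ell_2$ and $\xi = 1$ for $\ell_1$, and the same Johnson--Lindenstrauss postprocessing for the $\ell_2$ case.

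For Parts 3 and 4 the paper takes a different and cleaner route than you propose. Instead of reducing degree and reusing the vertex embedding, it embeds \emph{edges}: for the $i$-th edge $(a,b)$ of $G$ it sets $v_i = e_a + e_b \in \mathbb{R}^n$. Two such vectors are at $\ell_2$-distance $\sqrt{2}$ (resp.\ $\ell_1$-distance $2$) if the edges share an endpoint and at distance $2$ (resp.\ $4$) otherwise, with no dependence on vertex degrees at all. Since $G$ (once isolated vertices are handled in $O(1)$ rounds) is connected exactly when its line graph is connected, the $2$-SLC value jumps from $\sqrt{2}$ to $2$ under $\ell_2$ and from $2$ to $4$ under $\ell_1$, giving the stated factors immediately.

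Your route for Parts 3--4 has a gap as written. The distances you compute, $\deg i + \deg j - 2$ for an edge and (in fact) $\deg i + \deg j + 2$ for a non-edge, depend on the endpoint degrees, and ``standard degree reduction'' yields degree at most $3$, not exactly $3$. With mixed degrees the worst-case ratio degrades: e.g.\ an edge between two degree-$3$ vertices has squared $\ell_2$-length $4$, while a non-edge between two degree-$2$ vertices has squared length $6$, so the gap is only $\sqrt{3/2}$ rather than $\sqrt{2}$; degree-$1$ vertices make it worse still. To recover the exact constants $\sqrt{2}$ and $2$ you would need the reduced instance to be $3$-regular, and you have not supplied an $O(1)$-round MPC reduction from general sparse connectivity to $3$-regular connectivity. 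The paper's edge-vector embedding sidesteps this entirely, which is exactly the ``robustness'' you were looking for.
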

\begin{proof}
	\textbf{Part 1.} Given an instance of the ``one cycle vs. two cycles problem'' we reduce it to the $2$-SLC problem as follows: 
	\begin{enumerate}
		\item Create a vector $v'_i \in \mathbb R^n$ for each vertex where $v'_i = e_i$ and $e_i$ is the $i$-th standard unit vector.
		\item For each edge $(a,b)$ in the input graph update the corresponding vectors as $v'_a = v'_a + \xi e_b$ and $v'_b = v'_b + \xi e_a$ where $\xi = \frac{1}{\sqrt{2}}$.
		\item Apply Johnson-Lindenstrauss transform to $v'_1, \dots, v'_n$ to construct $v_1, \dots, v_n \in \mathbb R^d$ where $d = O(\log n/\epsilon^2)$.
	\end{enumerate}
	It is important that the above reduction can be performed in only a constant number of MPC rounds. Indeed, Step 1 can be done locally by partitioning vectors between machines and to perform Step 2 we can send each edge $(a,b)$ to the machines holding vectors $v_a$ and $v_b$.
	In order to perform Step 3 note that for each $i$ we have $v_i = M v'_i$ where $M$ is the Johnson-Lindenstrauss matrix and each $v'_i$ has at most $3$ non-zero entries. Hence, all $v_i$ can be computed in one round of MPC with only $O(\log n/\epsilon^2)$ communication per vector.
	
	\begin{proposition}\label{prop:max-ratio}
		If there is an edge $(i,j)$ in the input graph then $\|v'_i - v'_j\|_2 = \sqrt{2}(\sqrt{2 - \sqrt{2}})$, otherwise $\|v'_i - v'_j\|_2 = 2$.
	\end{proposition}
	\begin{proof}
		Indeed, if there is an edge $(i,j)$ in the input then there exist two other edges $(i,i')$ and $(j,j')$ and hence, the non-zero entries of $v'_i$ and $v'_j$ are as follows: $v'_{ii} = 1, v'_{ii'} = \xi, v'_{ij} =\xi, v'_{jj} = 1, v'_{jj'} = \xi, v'_{ji} = \xi$. Hence $\|v'_i - v'_j\|_2 = \sqrt{2(1 - \xi)^2 + 2\xi^2}$.
		On the other hand, if there is no edge $(i,j)$ then there exist four edges $(i,i'), (i,i''), (j,j')$ and $(j,j'')$ and non-zero entries of $v'_i$ and $v'_j$ are:
		$v'_{ii} = 1, v'_{ii'} = \xi, v'_{ii''} =\xi, v'_{jj} = 1, v'_{jj'} = \xi, v'_{jj''} = \xi$. Hence $\|v'_i - v'_j\|_2 = \sqrt{2 + 4\xi^2}$. Maximum of the ratio $\frac{\sqrt{2 + 4\xi^2}}{\sqrt{2(1 - \xi)^2 + 2\xi^2}}$ is achieved when $\xi = 1/\sqrt{2}$ and equals $\sqrt{2 + \sqrt{2}}$.
	\end{proof}
	Clearly by Proposition~\ref{prop:max-ratio}, if the input graph is one cycle then the cost of $2$-SLC of $v'_1, \dots, v'_n$ equals $\sqrt{2}\sqrt{2 - \sqrt{2}}$, otherwise it is $2$. As Johnson-Lindenstrauss transform preserves all pairwise distances up to a multiplicative $(1\pm\epsilon)$ factor with high probability the same is true for the cost of $2$-SLC of $v_1, \dots, v_n$ up to $\pm \epsilon$ error. This completes the proof.
	
	\textbf{Part 2.} We perform reduction as in Part 1 but without Step 3 and setting $\xi = 1$.
	Note that since the resulting vectors have at most 3 non-zero entries each the input can be represented in $O(n)$ space.
	A calculation similar to the above shows that in this case if there is an edge $(i,j)$ in the graph then $\|v'_i - v'_j\|_1 = 2|1 - \xi| + 2 |\xi|$. Otherwise, $\|v'_i - v'_j\|_1 = 2 + 4 |\xi|$.
	The ratio between these two cases is maximized when $\xi = 1$ and equals $3$.
	\qedhere
	
	\textbf{Part 3.} Given an instance $G(V,E)$ of sparse connectivity we reduce it to $\ell_2$-2-SLC as follows.
	Let $n = |V|$ and $m = |E|$.
	We can assume that $G$ has no isolated vertices as connectivity instances containing such vertices can be solved in $O(1)$ rounds of MPC by identifying isolated vertices.
	
	For the $i$-th edge of the input we create a vector $v'_i \in \mathbb R^n$. We set $v'_{i,j} = 1$ if the $i$-th edge in the input is adjacent on vertex $j$ and $v'_{ij} = 0$ otherwise. Then we apply JL-transform to reduce the dimension of constructed vectors to $O(\log n/\epsilon^2)$ obtaining vectors $v_1, \dots, v_m$ as in Part 1. 
	
	Since $G$ has no isolated vertices it is connected if and only if the set of its edges forms a connected subgraph. By construction if two edges $i$ and $j$ share a vertex then $\|v'_i - v'_j\| = \sqrt{2}$, otherwise $\|v'_i - v'_j\| = 2$.
	Hence, if $G$ is connected all edges in the $\ell_2$-MST for $v_1, \dots, v_m$ have length $\sqrt{2}$. Otherwise, there exists an edge in $\ell_2$-MST of length $2$.
	Hardness of $(\sqrt{2}-\epsilon)$-approximation hence follows from the fact that JL-transform preserves all distances up to $(1 \pm \epsilon)$-approximation.
	
	\textbf{Part 4.} We use the same reduction as in Part 3 but without using the JL-transform in the end. Since the instance of sparse connectivity has $O(n)$ edge we obtain $O(n)$ vectors with $2$ non-zero entries in each. Hence, resulting instance can be stored in $O(n)$ space. As in Part 3 note that if two edges $i$ and $j$ share a vertex then $\|v_i - v_j\|_1 = 2$, otherwise $\|v_i - v_j\| = 4$. 
	Hence the costs of $\ell_1$-$2$-SLC differ by a factor of $2$ depending on whether $G$ is connected or not.
	
\end{proof}

\subsection{Hardness of Hamming $k$-SLC}\label{sec:hamming-hardness}

\begin{theorem}\label{thm:hamming-k-slc-hardness}
	No algorithm for computing Hamming $k$-SLC	cost for $d = 2$ in $o(\log n)$ rounds of MPC can achieve better than $2$-approximation under Conjecture~\ref{conj:connectivity}.
\end{theorem}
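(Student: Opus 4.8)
The plan is to reduce sparse connectivity (Conjecture~\ref{conj:connectivity}) to the problem of computing the $2$-SLC cost of a planar Hamming instance, arranging the reduction so that a connected input graph yields cost exactly $1$ while a disconnected one yields cost exactly $2$. The starting observation is that for $d=2$ the Hamming distance between two distinct points takes only the values $1$ and $2$: it equals $1$ precisely when the two points agree in exactly one coordinate (same ``row'' or same ``column'') and $2$ when they differ in both. Consequently the graph $H_1$ on the point set whose edges join pairs at Hamming distance $1$ is exactly a rook-type graph, and since every pairwise distance is at most $2$, the optimal $2$-SLC cost (the longest edge of the MST, via the MST characterization from the introduction) equals $1$ if $H_1$ is connected and $2$ otherwise. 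Thus it suffices to encode connectivity of an arbitrary sparse graph into connectivity of $H_1$.

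Given an instance $G=(V,E)$ with $|V|=n$ and $|E|=O(n)$ (and, without loss of generality, no isolated vertices, which can be removed in $O(1)$ rounds), I would build the following set of $O(n)$ points in $\mathbb{R}^2$: a diagonal point $(v,v)$ for every $v\in V$, and a point $(u,v)$ for every edge $(u,v)\in E$ under an arbitrary orientation. This construction is entirely local and data-parallel and is therefore carried out in $O(1)$ MPC rounds. The conceptual device is to read each point as an edge of a bipartite graph $B$ whose two vertex classes are a row-copy and a column-copy of $V$: point $(x,y)$ becomes the edge $(\mathrm{row}_x,\mathrm{col}_y)$. Two points lie at Hamming distance $1$ iff they share a coordinate, i.e. iff the corresponding edges of $B$ share an endpoint, so the connected components of $H_1$ are exactly the connected components of $B$ (the diagonal edges guarantee $B$ has no isolated vertices).

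The correctness of the encoding is the crux. I would prove that $B$ is connected iff $G$ is connected by contracting all diagonal edges $(\mathrm{row}_v,\mathrm{col}_v)$, which identifies $\mathrm{row}_v$ with $\mathrm{col}_v$ into a single vertex $[v]$. Under this contraction each edge-point $(u,v)$ becomes an edge joining $[u]$ and $[v]$, so the contracted graph is precisely $G$ (self-loops from diagonals being irrelevant); since contraction preserves connectivity, $B$ is connected iff $G$ is. Combining with the previous paragraph: if $G$ is connected then $H_1$ is connected and the $2$-SLC cost is $1$; if $G$ is disconnected then $H_1$ has at least two components, the MST must use at least one distance-$2$ edge, and the $2$-SLC cost is $2$.

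Finally I would close the argument by noting that because the cost is always exactly $1$ or $2$, any algorithm with approximation ratio strictly smaller than $2$ must report a value separating the two cases: on a connected instance it outputs at most $1$, while on a disconnected instance it outputs at least $2/(2-\epsilon)>1$. Thresholding the output at $1$ therefore decides connectivity of $G$, and since the reduction and the post-processing add only $O(1)$ rounds, a $(2-\epsilon)$-approximation running in $o(\log n)$ rounds would solve sparse connectivity in $o(\log n)$ rounds, contradicting Conjecture~\ref{conj:connectivity}. The only delicate step is the combinatorial equivalence of $H_1$-connectivity and $G$-connectivity through the bipartite incidence graph and the diagonal-contraction; the round accounting and the gap-to-decision argument are routine.
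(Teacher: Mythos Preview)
Your proposal is correct and uses exactly the same reduction as the paper: map each vertex $v$ to the diagonal point $(v,v)$ and each edge $(u,v)$ to the point $(u,v)$, then argue that the Hamming $2$-SLC cost is $1$ if $G$ is connected and $2$ otherwise. The only difference is presentational: the paper argues the connectivity equivalence directly (exhibiting the cost-$1$ spanning subgraph via the edges $(i,i)$--$(i,j)$--$(j,j)$, and observing that a component split of $G$ forces all cross distances to be $2$), whereas you route the same fact through the line graph of an auxiliary bipartite graph and a diagonal contraction; both arrive at the same conclusion with the same gap.
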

\begin{proof}
	Let $G(V,E)$ be an instance of sparse connectivity. Our reduction to Hamming $2$-SLC constructs an input set of $2$-dimensional vectors as follows:
	\begin{itemize}
		\item For each vertex $i \in V$ create a vector $(i,i)$.
		\item For each edge $(i,j) \in E$ create a vector $(i,j)$.
	\end{itemize}
	Clearly this reduction can be performed in a constant number of rounds of MPC and the resulting instance has $|V| + |E| = O(n)$ many vectors.
	We will show that if the input graph is connected the cost of Hamming $2$-SLC of the input equals $1$ and the cost is $2$ otherwise. 
	Indeed, note that the distances between resulting vectors are always either $1$ or $2$.
	If $G$ is connected then it is easy to construct a connected spanning subgraph in the resulting Hamming graph where each edge has cost $1$.
	Indeed, consider a subgraph that for each edge $(i,j)$ in the input graph contains two edges: one between vectors $(i,i)$ and $(i,j)$ and another between vectors $(j,j)$ and $(i,j)$.
	Clearly, if the input graph is connected then this is a connected spanning subgraph. Hence the Hamming MST cost of the constructed point set equals $|V| + |E| - 1$ and the Hamming $2$-SLC cost equals 1.
	On the other hand, if $G$ is disconnected then consider any partitioning $(S,T)$ of $G$ into connected components.
	Clearly, any two vectors representing vertices belonging to different parts of this partition in our reduction are at distance $2$ from each other. This implies that the Hamming MST cost is at least $|V| + |E|$ and the Hamming $2$-SLC cost is $2$.
	
\end{proof}

\section{Implementation details and performance analysis}\label{sec:implementation}
The implementation details in this section are similar to~\cite{ANOY14a}. The main difference is that we need the Solve-and-Sketch to work under $\ell_1$ and $\ell_\infty$ rather than just $\ell_2$ as in~\cite{ANOY14a} which requires some modifications in the analysis.

\subsection{Distance-preserving partitions}

We start by recalling definitions of distance preserving hierarchical partitions from~\cite{ANOY14b}.
Let $M(S,\rho)$ be a metric space with distance function $\rho$.
For $S' \subseteq S$ we denote its diameter as $\Delta(S') = \sup_{x,y \in S'} \rho(x,y)$.
A \textit{deterministic hierarchical partition} P with $L$ \textit{levels} is defined as a sequence $P = (P_0, \dots, P_L)$ where $P_L = \{S\}$ and each level $P_\ell$ is a subdivision of $P_{\ell + 1}$. For a partition $P_i$ we call its parts \textit{cells}. The \textit{diameter} at level $i$ is defined as $\Delta(P_i) = \max_{C \in P_i} \Delta(C)$.
The \textit{degree of a cell} $C \in P_\ell$ is $deg(C) = |\{C' \in P_{\ell -1}: C' \subseteq C\}|$.
The \textit{degree of a hierarchical partition} is the maximum degree of any of its cells. The unique cell at level $\ell$ containing a point $x$ is denoted as $C_\ell(x)$. We say that a partition is indexable if this cell can be computed based on $x$ and $\ell$.
A \textit{randomized hierarchical partition} is a distribution over deterministic hierarchical partitions.

\begin{definition}[Distance-preserving partition]~\cite{ANOY14b}
	For parameters $a \in (0,1)$, $b,c \in \mathbb R^+$ and $\gamma > 1$ a randomized hierarchical partition $\mathcal P$ of a metric space with $L$ levels is $(a,b,c)$-distance-preserving with approximation $\gamma$ if the degree of all deterministic partitions in its support is at most $c$ and the following properties are satisfied for $\Delta_\ell = \gamma a^{L - \ell}\Delta(S)$:
	\begin{enumerate}
		\item (Bounded diameter) For every deterministic partition $P = (P_0, \dots, P_L)$ in the support of $\mathcal P$ and for all $\ell \in \{0, \dots, L\}$ it holds that:
		$$\Delta(P_\ell) \le \Delta_\ell.$$
		\item (Probability of cutting an edge) For every $x,y \in S$ and for all $\ell \in \{0, \dots, L\}$:
		$$\Pr_{P \sim \mathcal P}[C_\ell(x) \neq C_\ell(y)] \le b \frac{\rho(x,y)}{\Delta_\ell}.$$
	\end{enumerate}
\end{definition}

We use the following construction of~\cite{ANOY14b} to build such a distance-preserving partition $\mathcal P$ for $S \subseteq \mathbb R^d$.
We can always shift $S$ such that all points fit into a box $[0, \Delta]^d$ where $\Delta$ is the diameter of the metric space$(S,\ell^d_\infty)$.
Pick a vector $r \in \mathbb R^d$ uniformly at random from $[0,\Delta]^d$.
Two points $u$ and $v$ belong to the same cell at level $\ell \in \{0, \dots, L\}$ if and only if for all dimensions $i \in [d]$ it holds that $\lfloor \frac{(u_i - r_i)\alpha^{L - \ell}}{\Delta}\rfloor = \lfloor \frac{(v_i - r_i)\alpha^{L - \ell}}{\Delta}\rfloor$ where $\alpha$ is a parameter (see Figure~\ref{fig:abc-partition} for an example).
Note that this partition is indexable since coordinates of the point $x \in \mathbb R^d$, random shift $r$ and $\ell$ suffice for computing $C_\ell(x)$.

\begin{lemma}[\cite{ANOY14b}, Lemma 5.3]\label{lem:l2-partition}
	Indexable randomized hierarchical partition $\mathcal P$ given by the construction above has $L = O(\log_a |S|)$ levels can be constructed in $O(1)$ rounds of MPC and is an $(1/\alpha, d, (\alpha + 1)^d)$-distance-preserving partition for $(S, \ell^d_2)$ with approximation $\gamma \le \sqrt{d}$.
\end{lemma}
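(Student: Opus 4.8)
The plan is to verify directly that the grid construction meets every clause of the definition of an $(a,b,c)$-distance-preserving partition, reading the parameters off the grid geometry. First I would record the explicit shape of a cell: at level $\ell$ two points lie in the same cell iff their shifted, rescaled coordinates have equal integer parts, so each level-$\ell$ cell is an axis-parallel box of side $g_\ell = \Delta\,\alpha^{-(L-\ell)} = \Delta\, a^{L-\ell}$ in every coordinate, where $\Delta$ is the $\ell_\infty$-diameter of $S$. Taking $\alpha$ to be a positive integer makes the level-$(\ell-1)$ grid a refinement of the level-$\ell$ grid, since $\lfloor \lfloor \alpha t\rfloor/\alpha\rfloor = \lfloor t\rfloor$; hence $\mathcal P$ is genuinely hierarchical and indexable, as $C_\ell(x)$ is determined by $x$, the shift $r$, $\Delta$, $\alpha$ and $\ell$ through the floor formula alone, with no global data. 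For the degree bound I would count, per axis, how many level-$(\ell-1)$ intervals a single level-$\ell$ interval of length $\alpha g_{\ell-1}$ can meet on a grid of spacing $g_{\ell-1}$: at most $\lfloor\alpha\rfloor + 1 \le \alpha+1$, so over $d$ axes the degree is at most $(\alpha+1)^d = c$.

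For the two metric properties I would calibrate $\Delta_\ell$ to the $\ell_2$-diameter of a level-$\ell$ cell, namely $\sqrt d\, g_\ell = \sqrt d\,\Delta\, a^{L-\ell}$. Writing this in the required form $\gamma\, a^{L-\ell}\Delta(S)$ forces $\gamma = \sqrt d\,\Delta/\Delta(S)$, and since the $\ell_\infty$-diameter $\Delta$ never exceeds the $\ell_2$-diameter $\Delta(S)$ this yields $\gamma \le \sqrt d$; the bounded-diameter clause then holds by construction. The cut-probability clause is where the factor $b = d$ is spent. Because the uniform shift induces a uniform offset modulo $g_\ell$ in each coordinate (using that $\Delta/g_\ell = \alpha^{L-\ell}$ is an integer), the probability that coordinate $i$ is split is at most $|x_i - y_i|/g_\ell$, so a union bound over the $d$ coordinates gives $\Pr_{P}[C_\ell(x)\neq C_\ell(y)] \le \|x-y\|_1/g_\ell$. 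I would then pass from $\ell_1$ to $\ell_2$ by Cauchy--Schwarz, $\|x-y\|_1 \le \sqrt d\,\|x-y\|_2 = \sqrt d\,\rho(x,y)$, and substitute $g_\ell = \Delta_\ell/\sqrt d$ to obtain exactly $\Pr_P[C_\ell(x)\neq C_\ell(y)] \le d\,\rho(x,y)/\Delta_\ell = b\,\rho(x,y)/\Delta_\ell$.

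It remains to fix $L$ and bound the MPC cost. I would pick $L$ just large enough that the finest cells have diameter below the minimum interpoint distance, making each bottom-level cell a singleton; this needs $a^{L}$ below the reciprocal aspect ratio, i.e. $L = O(\log_{1/a}(\Delta/\delta_{\min}))$, which is the claimed $O(\log_a|S|)$ once the aspect ratio is $\mathrm{poly}(|S|)$ (guaranteed by the standard coordinate-snapping preprocessing). The construction then costs $O(1)$ rounds: $\Delta$ is a coordinate-wise max-minus-min computed by a constant-round aggregation, the $O(d)$ shifts $r$ are drawn on one machine and broadcast, and the partition is never materialized since indexability lets each machine evaluate $C_\ell(\cdot)$ locally. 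The step I would treat most carefully, and the only real subtlety, is this joint calibration of $\Delta_\ell$: it must be large enough for the bounded-diameter clause yet small enough for the cut clause, and the two are compatible precisely because both pin $\Delta_\ell$ to the cell diameter $\sqrt d\, g_\ell$ --- which is exactly what drives $\gamma$ down to $\sqrt d\,\Delta/\Delta(S) \le \sqrt d$ instead of a larger value.
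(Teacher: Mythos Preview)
Your proof is correct. The paper does not itself prove Lemma~\ref{lem:l2-partition} (it is quoted from \cite{ANOY14b}), but your argument is exactly the template the paper uses in its own proof of the companion Lemma~\ref{lem:l1linf-partition} for $\ell_1^d$ and $\ell_\infty^d$: read off the cell side $g_\ell$, set $\gamma$ and $\Delta_\ell$ so the bounded-diameter clause is tight, and bound the cut probability by a per-coordinate union bound giving $\|x-y\|_1/g_\ell$; the only additional step for $\ell_2$, the Cauchy--Schwarz passage $\|x-y\|_1 \le \sqrt d\,\|x-y\|_2$ together with $g_\ell = \Delta_\ell/\sqrt d$, is handled correctly and is what produces $b=d$.
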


Below we show that this partition $\mathcal P$ is also distance-preserving for $\ell^d_1$ and $\ell^d_\infty$.

\begin{lemma}\label{lem:l1linf-partition}
	Indexable randomized hierarchical partition $\mathcal P$ is: 1)  $(1/\alpha, d^2,(\alpha + 1)^d)$-distance-preserving for $\ell^d_1$ with approximation $d$,
	2) $(1/\alpha, d, (\alpha + 1)^d)$-distance-preserving for $\ell^d_\infty$ with approximation $1$.
\end{lemma}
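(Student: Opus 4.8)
The plan is to verify, for each of the two metrics separately, the three requirements in the definition of an $(a,b,c)$-distance-preserving partition: the degree bound, the bounded-diameter property, and the edge-cutting probability. The degree bound $c = (\alpha+1)^d$ is a purely combinatorial property of the randomly shifted grid and does not depend on the metric, so it is inherited verbatim from the $\ell_2^d$ analysis in Lemma~\ref{lem:l2-partition}: a level-$\ell$ cell is a box of side length $\Delta/\alpha^{L-\ell}$, where $\Delta$ denotes the $\ell_\infty$-diameter of $S$ used to build the grid, and refining it by the level-$(\ell-1)$ grid of spacing $\Delta/\alpha^{L-\ell+1}$ splits it into at most $\alpha+1$ pieces per coordinate. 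Thus only the diameter and cut-probability conditions need metric-specific arguments, and I would set $a = 1/\alpha$ throughout.

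The common core is the edge-cutting probability, which I would compute one coordinate at a time. For fixed $x,y$ and level $\ell$, the random shift $r_i$ induces a uniform offset of the one-dimensional grid of spacing $\Delta/\alpha^{L-\ell}$, so the probability that coordinate $i$ separates $x$ and $y$ is at most $|x_i - y_i|/(\Delta/\alpha^{L-\ell})$. A union bound over the $d$ coordinates then yields
$$\Pr_{P \sim \mathcal P}[C_\ell(x) \neq C_\ell(y)] \le \frac{\alpha^{L-\ell}}{\Delta}\sum_{i=1}^d |x_i - y_i| = \frac{\alpha^{L-\ell}}{\Delta}\|x - y\|_1,$$
an estimate identical for both metrics; all that differs is how it is matched against the required bound $b\,\rho(x,y)/\Delta_\ell$.

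For $\ell_\infty^d$ I would use that the $\ell_\infty$-diameter of a cell equals its side length, so the bounded-diameter condition holds with $\gamma = 1$ and $\Delta_\ell = \Delta/\alpha^{L-\ell}$; then bounding $\|x-y\|_1 \le d\|x-y\|_\infty$ in the display shows the cut probability is at most $d\,\|x-y\|_\infty/\Delta_\ell$, giving $b = d$. For $\ell_1^d$ the $\ell_1$-diameter of a cell is at most $d$ times its side length, and since $\Delta = \Delta_\infty(S) \le \Delta_1(S)$ the choice $\gamma = d$, $\Delta_\ell = d\,\Delta_1(S)/\alpha^{L-\ell}$ dominates it; for the cut probability I would keep $\rho(x,y) = \|x-y\|_1$ and absorb the mismatch between $\Delta$ and $\Delta_1(S)$ via $\Delta_1(S) \le d\,\Delta_\infty(S) = d\Delta$, which shows the displayed bound is at most $d^2\,\|x-y\|_1/\Delta_\ell$ and yields $b = d^2$.

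The only real subtlety, and the step I would be most careful about, is the bookkeeping between the three diameters involved: the grid is constructed from the $\ell_\infty$-diameter $\Delta$, whereas each distance-preserving guarantee must be normalized by the diameter of $S$ under its own ambient metric. The parameters $b$ and $\gamma$ in the statement are precisely what the two norm comparisons $\|z\|_\infty \le \|z\|_1 \le d\|z\|_\infty$ produce when applied in the right places, so the proof reduces to tracking these factors of $d$ cleanly. I would also note that the per-coordinate probability bound is exact because $\Delta$ is an integer multiple ($\alpha^{L-\ell}$) of the cell side length, so the offset $r_i$ is genuinely uniform modulo the period.
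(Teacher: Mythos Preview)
Your proposal is correct and follows essentially the same argument as the paper: inherit the degree bound $(\alpha+1)^d$ from Lemma~\ref{lem:l2-partition}, compute the cut probability coordinatewise to get the common $\ell_1$ estimate $\alpha^{L-\ell}\|x-y\|_1/\Delta$, then for $\ell_\infty^d$ set $\gamma=1$, $\Delta_\ell = \alpha^{\ell-L}\Delta$ and use $\|x-y\|_1 \le d\|x-y\|_\infty$, while for $\ell_1^d$ set $\gamma=d$, $\Delta_\ell = d\,\alpha^{\ell-L}\Delta(S)$ and use $\Delta(S) \le d\Delta$ to pick up the extra factor of $d$. The only cosmetic difference is that the paper handles the two metrics sequentially rather than factoring out the common union-bound step first.
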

\begin{proof}
	The degree bound of $(\alpha + 1)^d$ follows from Lemma~\ref{lem:l2-partition} in both cases so in the rest of the proof we only analyze other parameters of the partition.
	
	\textbf{Part 1.}
	Under $\ell_1^d$ we have $\Delta(S) \ge \Delta$ and $\Delta(S) \le d \Delta$.
	By construction cells of the partition at level $\ell$ have diameter at most $d \Delta \alpha^{\ell - L}$.
	Hence we can set $\gamma = d$ and $\Delta_\ell = d \alpha^{\ell - L}\Delta(S)$ which satisfies the bounded diameter condition $\Delta(P_\ell) \le \Delta_\ell$.
	To verify the condition on probability of cutting an edge consider two points $u,v \in \mathbb R^d$.
	The probability that $\lfloor \frac{(u_i - r_i)\alpha^{L - \ell}}{\Delta}\rfloor \neq \lfloor \frac{(v_i - r_i)\alpha^{L - \ell}}{\Delta}\rfloor$ for a fixed $i$ is at most $\frac{\|u_i - v_i\|_1 \alpha^{L - \ell}}{\Delta}$.
	By a union  bound the probability that $u$ and $v$ belong to different cells at level $\ell$ is at most 
	$$\frac{\|u - v\|_1 \alpha^{L - \ell}}{\Delta} = \frac{d \|u - v\|_1 \Delta(S)}{\Delta_\ell \Delta} \le \frac{d^2 \|u - v\|_1}{\Delta_\ell}.\qedhere$$
\end{proof}

\textbf{Part 2.}
Under $\ell^d_\infty$ we have $\Delta(S) = \Delta$.
By construction cells of the partition at level $\ell$ have diameter at most $\Delta \alpha^{\ell - L}$. Hence we can set $\gamma = 1$ and $\Delta_\ell = \alpha^{\ell - L} \Delta$.
As in the previous case the probability that two vectors $u, v \in \mathbb R^d$ belong to different cells of the partition is at most $\frac{\|u - v\|_1 \alpha^{L - \ell}}{\Delta}$ which can be bounded as follows:
$$\frac{\|u - v\|_1 \alpha^{L - \ell}}{\Delta} \le \frac{d \|u - v\|_\infty \alpha^{L - \ell }}{\Delta} = \frac{d \|u - v\|_\infty}{\Delta_\ell}.$$
Thus $\mathcal P$ is indeed an $(1/\alpha, d, (\alpha + 1)^d)$-distance-preserving partition for $\ell^d_\infty$ with approximation $1$.
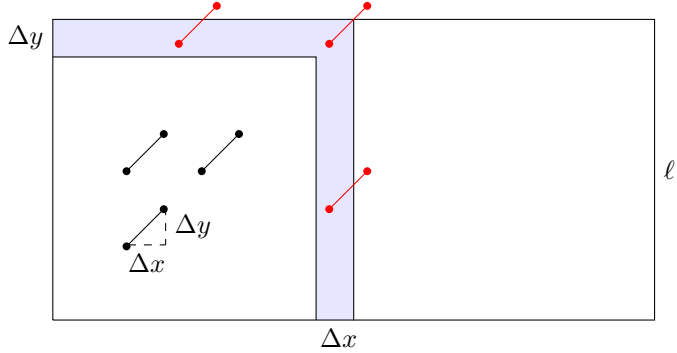
\begin{figure}
	\begin{center} 
	\usetikzlibrary{shapes.geometric}
	\begin{tikzpicture}
	\fill[blue!10!white] (3.5,0) rectangle (3.98,3.98);
	\fill[blue!10!white] (0,3.5) rectangle (3.98,3.98);
	\draw (0,0) -- node[pos= .95,below] {$\Delta x$} (4,0) -- (4,4) -- (0,4) --  node[pos= 0.06,left] {$\Delta y$}  (0,0) ;
	\draw (4,0) -- (8,0) --  node[pos= 0.5,right] {$\ell$}(8,4) -- (4,4);
	\draw (3.5, 0) -- (3.5, 3.5) -- (0, 3.5);
	
	\draw[red] (1.7, 3.7) -- (2.2,4.2);
	\fill[red] (1.675, 3.675) circle (1.5pt);
	\fill[red] (2.180, 4.180) circle (1.5pt);
	
	\draw[red] (3.7, 3.7) -- (4.2,4.2);
	\fill[red] (3.675, 3.675) circle (1.5pt);
	\fill[red] (4.180, 4.180) circle (1.5pt);
	
	\draw[red] (3.7, 1.5) -- (4.2,2.0);
	\fill[red] (3.675, 1.475) circle (1.5pt);
	\fill[red] (4.180, 1.980) circle (1.5pt);

	\draw (2.5, 2.5) -- (2.0,2.0);
	\fill[black] (2.475, 2.475) circle (1.5pt);
	\fill[black] (1.980, 1.980) circle (1.5pt);

	\draw (1.5, 1.5) -- (1.0,1.0);
	\fill[black] (1.475, 1.475) circle (1.5pt);
	\fill[black] (0.980, 0.980) circle (1.5pt);
	
	\draw[dashed] (1.5, 1.5) -- node[pos= 0.5,right] {$\Delta y$} (1.5, 1.0);
	
	\draw[dashed] (1.0, 1.0) -- node[pos= 0.5,below] {$\Delta x$} (1.5, 1.0);

	\draw (1.5, 2.5) -- (1.0,2.0);
	\fill[black] (1.475, 2.475) circle (1.5pt);
	\fill[black] (0.980, 1.980) circle (1.5pt);
	\end{tikzpicture}
	
	\caption{Probability of a cut proportional to $\Delta x + \Delta y$ corresponds to edge landing in the shaded region}
	\label{fig:abc-partition}
\end{center}
\end{figure}

\subsection{Solve-and-Sketch framework}
\label{sec:solve-and-sketch}
%
%

\newcommand{\ccomp}{Q}
\newcommand{\supp}{supp}
\newcommand{\weight}{w_P}
\newcommand{\weighti}[1]{\weight^{#1}}
\newcommand{\mycell}{C}
\newcommand{\randpart}{\mathcal P}
\newcommand{\detpart}{P}
\newcommand{\myparagraph}[1]{\smallskip\noindent{\bf #1}}
\newcommand{\comp}[1]{c(#1)}
\newcommand{\cost}[1]{\rho(#1)}
\newcommand{\copt}{\mathcal C^*}
\newcommand{\cour}{\mathcal C}
\newcommand{\tropt}{\mathcal T^*}
\newcommand{\trour}{\mathcal T}
\newcommand{\boxapprox}{\gamma}
\newcommand{\lev}{L}
\newcommand{\levdiam}{\Delta}

We use Solve-and-Sketch (SAS) framework of~\cite{ANOY14a} for computing an approximate
minimum spanning tree.  SAS framework works with a partition $\detpart =
(\detpart_0, \ldots, \detpart_\lev)$ of the input $M(S,\rho)$, sampled
from a randomized $(a, b, c)$-partition $\randpart$. Then SAS
algorithm proceeds through $\lev$ levels, and in level $\ell$ a
\emph{unit step} algorithm $\mathcal A_u$ is executed in each cell $\mycell$ of the partition $\detpart_\ell$, with input the union of the outputs of the
unit steps applied to the children of $\mycell$. The unit step
also outputs a subset of the edges of a spanning tree in addition to
the input for the next level. 
Once the unit step has been executed for the root cell of partition at level $P_L$ (and hence also for all other cells) the computation is complete.
We give the description of the unit step algorithm of ~\cite{ANOY14a} below.

\begin{definition}[$\delta$-covering]
	Let $M=(S,\rho)$ be a metric space and let $\delta > 0$ .
	A set $S' \subseteq S$ is a \emph{$\delta$-covering} if for any point $x \in S$, there is a point $y \in S'$ such that $\rho(x,y) \le \delta$.
\end{definition}

\begin{algorithm}
	\caption{Unit Step at Level $\ell$,}\label{alg:unit-step} 
	\SetKwInOut{Input}{input}\SetKwInOut{Output}{output} 
	
	\Input{Cell $\mycell \in \detpart_\ell$, a collection $V(\mycell)$
		of points in $\mycell$, and a partition $\ccomp = \{\ccomp_1,
		\ldots \ccomp_k\}$ of $V(\mycell)$ into previously computed
		connected components.}
	
	$\theta := 0$\\
	\While{$k>1$ and $\theta \le \epsilon \levdiam_\ell$} {
		Let $\tau = \min_{\substack{i, j\\i\ne j}} \min_{u \in \ccomp_i, v \in \ccomp_j} \rho(u,v)$\\
		Find $u \in \ccomp_i$ and $v \in \ccomp_j$ for some $i$ and $j$ such that $i\ne
		j$ and $\rho(u,v) \leq
		(1+\epsilon)\tau$.\\
		$\theta := \rho(u,v)$\\
		\If{$\theta \le \epsilon \levdiam_\ell$}{
			Output tree edge $(u,v)$.\\
			Merge $\ccomp_i$ and $\ccomp_j$ and update $\ccomp$ and $k$. \\
	} }
	\Output{$V' \subseteq V$, an $\epsilon^2
		\levdiam_\ell$-covering for $\mycell$, the partition $\ccomp(V')$
		induced by $\ccomp$ on $V'$. }
\end{algorithm}

Below we first show in Lemma~\ref{lem:mpc-implementation} that if the unit step can be executed efficiently then the overall computation also can. 
This theorem is analogous to Theorem 5.2 in~\cite{ANOY14b} but here we give a simpler and faster implementation using the fact that all our partitions are indexable.
Then we proceed to describe implementations of Algorithm~\ref{alg:unit-step} for $\ell_1$ and $\ell_\infty$ in Lemma~\ref{lem:unit-step}.
For $\ell_2$ such an implementation is given in~\cite{ANOY14b}, Lemma 3.23 and for $\ell_1$ and $\ell_\infty$ the implementation is analogous using approximate nearest neighbor search~\cite{AMNSW98} and appropriate $\epsilon^2\Delta_\ell$-covering construction for each metric.

\begin{lemma}\label{lem:mpc-implementation}
	Let $t_u(x)$ be a convex function and $s_u(x)$ be a function with at least linear growth, i.e. $s_u(x) \ge x$.
	Let $P = (P_0, \dots, P_L)$ be an indexable partition labeling $M(S,\rho)$ with $L$ levels sampled from a randomized $(a,b,c)$-distance-preserving family $\mathcal P$.
	Let $\mathcal A_u$ be a unit step algorithm which for an input of size $n_u$ takes time at most $t_u(n_u)$, uses space $s_u = s_u(n_u)$ and produces output of size $p_u = O(\min(p, n_u))$ where $p$ is a parameter.
	If space per machine is $s$ and $s_u(c p_u) \le s/3$ then the unit step computations for all cells of the partition can be executed in $O(L)$ rounds of MPC on $O(n/s)$ machines.
	Furthermore, local computation time per machine in every round is bounded by $t_u(s)$.
\end{lemma}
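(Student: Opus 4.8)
The plan is to simulate the bottom-up sweep of the Solve-and-Sketch framework over the $L$ levels of the partition, using the indexability of $\mathcal P$ to avoid any explicit tree construction and to route data to machines purely by computing cell labels. I would process the levels from $\ell = 0$ up to $\ell = L$, maintaining the invariant that at the start of processing level $\ell$, the data held in the system consists of, for each cell $C \in P_\ell$, the collection of outputs of the unit steps applied to the children of $C$ (i.e.\ to the cells of $P_{\ell-1}$ contained in $C$). For the base case $\ell = 0$ the input to each leaf cell is just the set of input points falling into it, which is computed by evaluating the indexing function $C_0(x)$ for each input point $x$.

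The core of each level is a routing-plus-local-computation step. First I would route data so that each non-empty cell $C \in P_\ell$ is assigned to a machine (or, for large cells, handled as described below), sending to that machine the outputs of all children of $C$. Because the partition is indexable, each input item can compute the label of its level-$\ell$ ancestor cell without any communication, so routing reduces to grouping by cell label, which is a sorting/grouping operation doable in $O(1)$ MPC rounds by the primitive of~\cite{GSZ11}. Once a cell's children-outputs are co-located on a machine, the machine runs $\mathcal A_u$ locally, emitting the tree edges and the $\epsilon^2 \Delta_\ell$-covering that forms the input for the parent at level $\ell+1$. Summing over the $L$ levels gives the claimed $O(L)$ round bound, since each level costs $O(1)$ rounds.

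The technical heart is the space accounting, and this is where I would be most careful. The degree bound $c$ guarantees that a cell $C \in P_\ell$ has at most $c$ children, each contributing an output of size $p_u = O(\min(p, n_u))$, so the input size to the unit step at $C$ is at most $c\, p_u$; the hypothesis $s_u(c p_u) \le s/3$ then ensures the local computation fits in the space budget $s$ (the constant $3$ leaving slack for incoming and outgoing buffers, matching the two-sided $\le s$ communication constraint of the model). The subtlety is cells whose raw input is large: a cell can contain many points even though its output is compressed to size $p_u$. Here I would use the at-least-linear growth of $s_u$ together with the covering-output guarantee to argue that whenever a cell's child-output union exceeds the single-machine budget, the computation can be split across $O(n/s)$ machines with only $O(1)$ extra rounds, so that no machine ever processes more than $s$ items and the per-round local time is bounded by the convex function evaluated at the full budget, $t_u(s)$. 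Convexity of $t_u$ is exactly what lets me bound the total and per-machine time by $t_u(s)$ rather than incurring a blow-up when a cell is partitioned among several machines. I expect this handling of oversized cells, and verifying that the total space across all machines stays $\tilde O(n)$ so that $O(n/s)$ machines suffice, to be the main obstacle; the remaining round-count bookkeeping is routine given the $O(1)$-round sorting primitive and the indexability of the partition.
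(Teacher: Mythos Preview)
Your high-level plan---process levels $0$ through $L$ in order, use indexability to compute parent-cell labels locally and route child outputs to their parents in $O(1)$ rounds per level, then run $\mathcal A_u$ locally---matches the paper. But you invert the central technical step. The hypothesis $s_u(c\,p_u)\le s/3$ already guarantees that at every level a single cell's input (at most $c$ child outputs, each of size at most $p_u$) together with its working space fits on one machine; using $s_u(x)\ge x$, the input itself is at most $s/3$. So there is never any need to split a cell's unit-step computation across machines. The actual issue is the opposite one: there can be far more nonempty cells than machines, and the paper's argument is a greedy \emph{packing} of many small unit-step jobs onto each machine---open a new machine only when no existing one has at least $2s/3$ free space. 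Because $p_u = O(\min(p,n_u))$ makes each round's total output $O(n)$, the total input to be packed is $O(n)$, and this first-fit packing yields $O(n/s)$ machines.

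Accordingly, your use of convexity is backwards. Convexity of $t_u$ is invoked to bound the time on a machine that sequentially executes \emph{several} jobs whose input sizes $n_1,\dots,n_r$ sum to at most $2s/3$: one gets $\sum_j t_u(n_j)\le t_u\bigl(\sum_j n_j\bigr)\le t_u(s)$ by superadditivity of a convex function with $t_u(0)\le 0$. It is not used for, and would not help with, partitioning a single job across machines; indeed, splitting would require showing that Algorithm~\ref{alg:unit-step} itself parallelizes in $O(1)$ rounds, which you do not establish. The constant $3$ in the hypothesis is there precisely to make the packing work (one third for the current job's input, one third for its workspace, and room to accumulate other jobs' inputs), not to leave ``slack for incoming and outgoing buffers'' as you suggest.
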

\begin{proof}
	We process the partition level by level assuming that when we process level $k$ all results for level $k - 1$ are already computed.
	Furthermore, all results from the previous round are labeled by cells of the partition at $P_k$ that they belong to.
	Thus in round $k$ we just execute the unit step for all cells in this level and label the results with the cell they belong at level $k + 1$. The latter part can be performed locally using the fact that $P$ is an indexable partition.
	Note that since $p_u = O(\min(p, n_u))$ the overall output produced in each round has size $O(n)$.
	
	\begin{proposition}
		Total number of machines that we need to execute each round is at most $O(n/s)$ and maximum time per machine in a round is at most $t_u(s)$.
	\end{proposition}
	\begin{proof}
		First we estimate total space we need to allocate on machines.
		Let $m_i$ be the number of nonempty subcells of the $i$-th cell in the $k$-th level, i.e. the number of inputs for this cell.
		For each $i$ and $j = 1, \dots, m_{i}$ let $n_{ij}$ be the size of output produced by the $j$-th subcell in the previous round that is now input for the $i$-th cell in $k$-th level. 
		
		We know that $\sum_{i,j} n_{ij} = O(n)$ and $n_{ij} \le p$.
		The input to $\mathcal A_u$ has overall size at most $cp$ and hence each execution uses space at most $s_u(cp) \le s/3$.
		Note that using the fact that $s_u(x) \ge x$ this also implies that the input for each job is of size at most $s/3$.
		
		We assign executions of  $\mathcal A_u$ jobs in round $k$ arbitrarily to machines in such a way that we only start a new machine if there is no existing machine with at least $2s/3$ space available.
		This ensures that in the end of the assignment process each machine has at least $s/3$ unused space for executing the jobs.
		We then execute jobs assigned to each machine sequentially using this space and store all inputs for all jobs locally. Note that our assignment process ensures that if $\mathcal S$ is the total space required to store inputs for all jobs then the total space we use is at most $3 \mathcal S + s = O(n)$ and the total number of machines is at most $3\mathcal S /s + 1 = O(n/s)$.
		
		Suppose we are using $t$ machines in total in this round and let $B_i$ be the subset of jobs assigned to $i$-th machine.
		The maximum time per machine required to execute jobs in $k$-th round is then: 
		
		\begin{align*}
		\max_{i = 1}^t\left[\sum_{j \in B_i} t_u\left(\sum_{\ell = 1}^{m_j} n_{j,\ell}\right)\right] &\le \max_{i = 1}^t \left[t_u\left(\sum_{j \in B_i} \sum_{\ell = 1}^{m_j} n_{j,\ell}\right)\right] \\
		& \le t_u(2s/3) \\
		& \le t_u(s)
		\end{align*}
		
	\end{proof}
	
	\begin{lemma}\label{lem:unit-step}
		If the input metric space $M$ is a subset of $\ell_1^d, \ell_2^d$ or $\ell_\infty^d$, the unit step Algorithm~\ref{alg:unit-step} has space complexity $s_u(n_u) = n_u \log^{O(1)} n_u$ words, time complexity $t_u(n_u) = (d/\epsilon)^{d + 1} n_u \log^{O(1)}n_u$ and output size $p_u = O(\min((1/\epsilon)^{2d}, n_u))$ words.
	\end{lemma}
	
	\begin{proof}
		For $\ell_2^d$ the proof is given in Lemma 3.23 of~\cite{ANOY14b}.
		For $\ell_1^d$ and $\ell_\infty^d$ the proof is analogous and follows from the fact that we can execute Algorithm~\ref{alg:unit-step} by using approximate nearest neighbor search. Details are analogous to the details for $\ell_2^d$ and are given in~\cite{I00,ANOY14b}. In particular Theorem 3.27 in~\cite{ANOY14b} describes how to use approximate nearest neighbor data structure of~\cite{AMNSW98} for $\ell_2^d$. Since~\cite{AMNSW98} gives data structures with the same performance for $\ell_1^d$ and $\ell_\infty^d$ as well the analysis of time and space performance in these two cases is the same.
		
		The bound on the output size follows from the fact that we can construct an $\epsilon^2\Delta_\ell$-covering by imposing a grid with step size $\xi$ and taking one arbitrary point from each cell of the grid (if it is nonempty).
		Fix $\xi =\epsilon^2 \Delta_\ell/d$ for $\ell_1$ and note that the total number of cells in the grid is at most $(\Delta_\ell/d\xi)^d$.
		Fix $\xi = \epsilon^2 \Delta_\ell$ for $\ell_\infty$ and note that the total number of cells in the grid is $(\Delta_\ell/\xi)^d$.
		In both cases the bound of $(1/\epsilon)^{2d}$ on the size of the output follows.
	\end{proof}
	
	Putting things together we get the proof of Theorem~\ref{thm:k-slc-main}:
	\begin{proof}[Proof of Theorem~\ref{thm:k-slc-main}]
		The algorithm is given as Algorithm~\ref{alg:partition-mst} and hence the approximation guarantee follows from Theorem~\ref{thm:mst-approximation}. Hence it only remains to analyze performance of Algorithm~\ref{alg:partition-mst}.
		Recall that this algorithm uses an $(a,b,c)$-distance-preserving partition where $a = 1/(s^{\alpha/d} - 1)$, $b = poly(d)$ and $c = s^\alpha$.
		Also recall that we set $\epsilon = \min\left(\frac{\eta}{6 c_1 L b}, \frac{\eta}{3 c_2}\right)$ and hence if we set $\kappa = \min\left(\frac{1}{6 c_1 L b}, \frac{1}{3 c_2}\right)$ then $\kappa \eta = \epsilon$ and $\kappa$ is a constant because $c_1, c_2, L$ and $b = poly(d)$ are constants.
		
		A distance-preserving partition in Step~\ref{step:partition} can be constructed  for $\ell_2^d$ using Lemma~\ref{lem:l2-partition} and for $\ell_1^d$ and $\ell_\infty^d$ using Lemma~\ref{lem:l1linf-partition}.
		This step takes $O(1)$ rounds of MPC under the resource constraints of the theorem.
		Thus, it suffices to show that Step~\ref{step:recursion} can be executed with required performance guarantees.
		Note that $p_u = O(\epsilon^{-2d}) = O((\kappa \eta)^{-2d}) \le s^{1 - 2\alpha}$ and hence $c p_u = O(s^{\alpha} s^{1 - 2\alpha}) = O(s^{1 - \alpha})$.
		Using the assumption that $\epsilon, \alpha$ and $d$ are constants by Lemma~\ref{lem:unit-step} we have $s_u(c p_u) = \tilde O(s^{1 - \alpha}) \le s/3$ for sufficiently large $s$.
		Thus combining Lemma~\ref{lem:unit-step} and Lemma~\ref{lem:mpc-implementation} and using the assumption that $\epsilon$ and $d$ are constant  we conclude that Step~\ref{step:recursion} can be executed on $O(n/s)$ machines with local computation time per machine in each round bounded by $\tilde O(s)$.
		Furthermore since the distance-preserving partition used in out algorithm has $O(\log_{s^{\alpha/d}} n) = O(d/\alpha \log_s n) = O(1)$ levels (by Lemma~\ref{lem:l2-partition}) from Lemma~\ref{lem:mpc-implementation} it follows that this step only takes $O(1)$ rounds of MPC. 
		Overall, each iteration of the loop in Algorithm~\ref{alg:partition-mst} can be executed in $O(1)$ rounds of MPC and hence sequential execution takes $O(\log n)$ rounds as desired.
		
		Finally, Boruvka's algorithm in Step~\ref{step:boruvka} is run on $O(n \log n)$ edges and can be implemented in $O(\log n)$ rounds using $\tilde O(n/s)$ machines under constraints of the theorem.
	\end{proof}
	
	
\end{proof}

\section{Experiments}\label{sec:experiments}
\paragraph{Small datasets} Four standard clustering datasets used in the literature were taken for experimental evaluation: 1) Image dataset, $d = 3$, $n =34112 $ (house images, \url{https://cs.joensuu.fi/sipu/datasets/}), 2) KDDCUP04Bio dataset , $d = 10$, $n = 145751$ (preprocessed to select 10 numerical dimensions out of 74, accessed via the link above), 3) Shuttle data set from the UCI ML repository, $d = 9$, $n = 43500$. 
4) US Census dataset from the UCI ML repository, $d = 8$, $n = 2548285$.

\paragraph{Large datasets}

In order to test scalability we took the largest real-valued vector datasets from the UCI ML repository: SIFT10M and HIGGS.
Both datasets have approximately 11 million entries so constructing the full matrix of distances in memory is clearly infeasible as the size of this matrix would be roughly 960TB in both cases\footnote{Assuming 8-byte double-precision arithmetic.}. Dimension reduction for this data was done using PCA for $d = 3$. Results are given in Table~\ref{tbl:scalability}.
\begin{table}[t]
	\caption{Scalability experiments.}
	\label{tbl:scalability}
	\vskip 0.15in
	\begin{center}
		\begin{small}
			\begin{sc}
				\begin{tabular}{lcccccr}
					\toprule
					Data set & $n$ points & $n^2$ edges & $d$  & Time (s) & $\epsilon$\\
					\midrule
					SIFT10M & $1.1 \times 10^7$ & $1.2 \times 10^{14}$ & $3$& $1.2\times 10^5$& $3$\\					
					HIGGS & $1.1 \times 10^7$ & $1.2 \times 10^{14}$ &$3$ & $8.4 \times 10^4$ & $10$\\
					\bottomrule
				\end{tabular}
			\end{sc}
		\end{small}
	\end{center}
	\vskip -0.1in
\end{table}
\subsubsection*{Experimental setup}
We implemented Algorithm~\ref{alg:partition-mst} in Java on Apache Spark 2.0.2 for Hadoop 2.7.3.
Experiments were performed on:
\begin{itemize}
	\item \textbf{Google Cloud Dataproc (GCD)} platform on two cluster configurations: 1) single-core 1 master / 7 worker (1m/7w)  cluster, 2) dual-core 1 master / 3 worker (1m/3w) cluster. Each core had an Intel Xeon E5 processor at 2.2--2.6 GHz and 3.75GB RAM + 10GB HDD space. Due to the limitations of the free tier access on GCD the total number of cores in a cluster is limited to 8, which is still sufficient to demonstrate at least and order of magnitude speedup over the benchmark sequential algorithm.
	
	Three standard clustering datasets used in the literature were taken for experimental evaluation: 1) Image dataset, $d = 3$, $n =34112 $ (house images, \url{https://cs.joensuu.fi/sipu/datasets/}), 2) KDDCUP04Bio dataset , $d = 10$, $n = 145751$ (preprocessed to select 10 numerical dimensions out of 74, accessed via the link above), 3) Shuttle data set from the UCI ML repository, $d = 9$, $n = 43500$. These datasets have been normazlied to have coordinates in each dimension have $0$ mean and unit variance.
	Note that under RAM restrictions of our setup building a full graph of $n^2$ distances locally is infeasible. Also, we could have used significantly less RAM per machine but when using GCD the RAM size is fixed for standard instances.
	
	\item \textbf{Local Simulation} with $200$ reducers on a Dell XPS13 Laptop with an Intel core I5 processor and 8GB RAM. 
	
	The dataset used was a standard US Census dataset from the UCI, ML repository, $d = 8$, $n = 2548285$. Once again the datasets were normazlied to have coordinates in each dimension have $0$ mean and unit variance. Due to limitations on the number of cores in the free tier access on GCD, we performed the experiments for this dataset on a local simulation.  	
\end{itemize}

\vspace{30pt}

\begin{center}
	\begin{minipage}{0.49\linewidth}
		\includegraphics[width=\textwidth]{./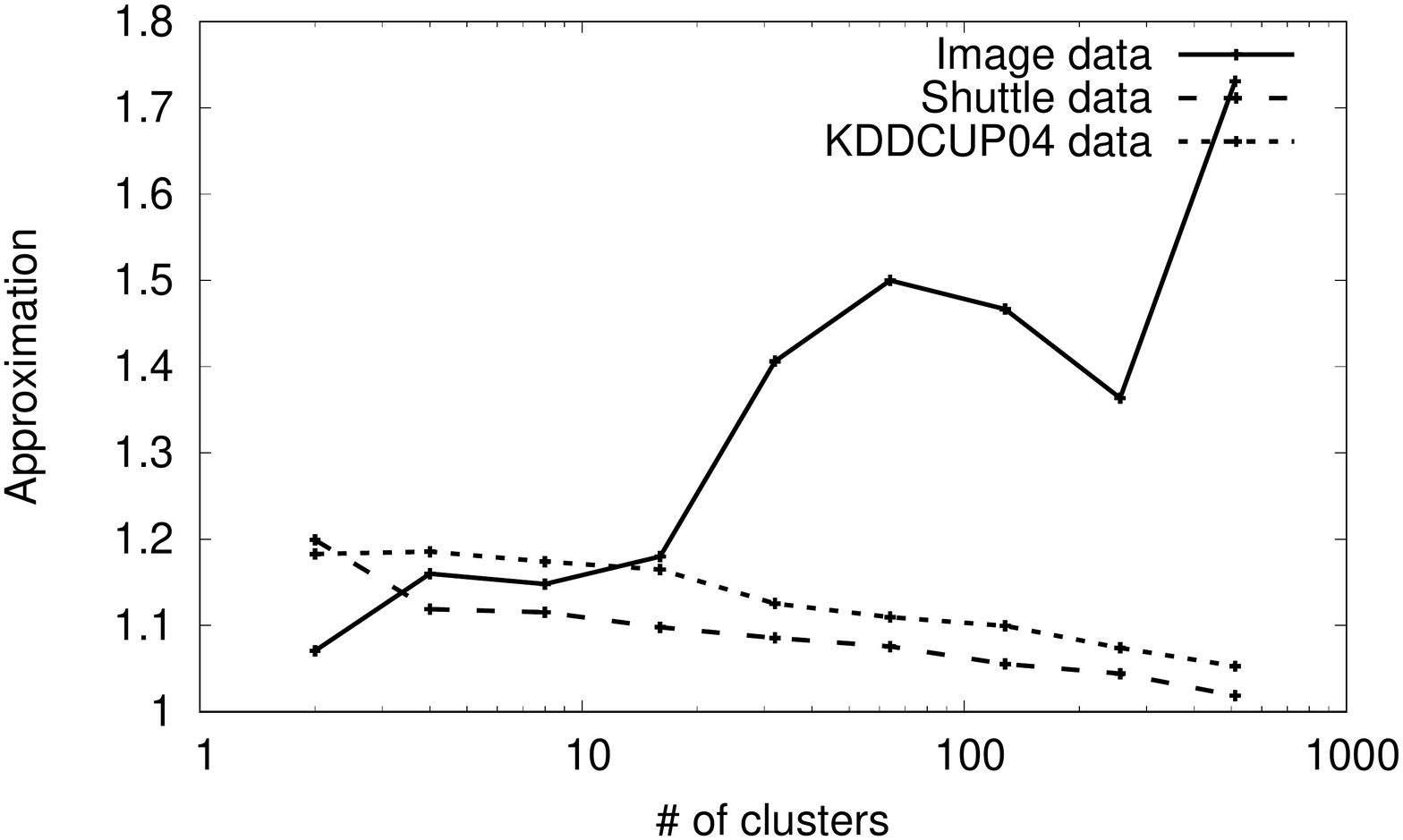}
		\captionof{figure}{Approximation vs. number of clusters}
		\label{plt:approxvsk}
	\end{minipage}
	\begin{minipage}{0.49\linewidth}
		\includegraphics[width=\textwidth]{./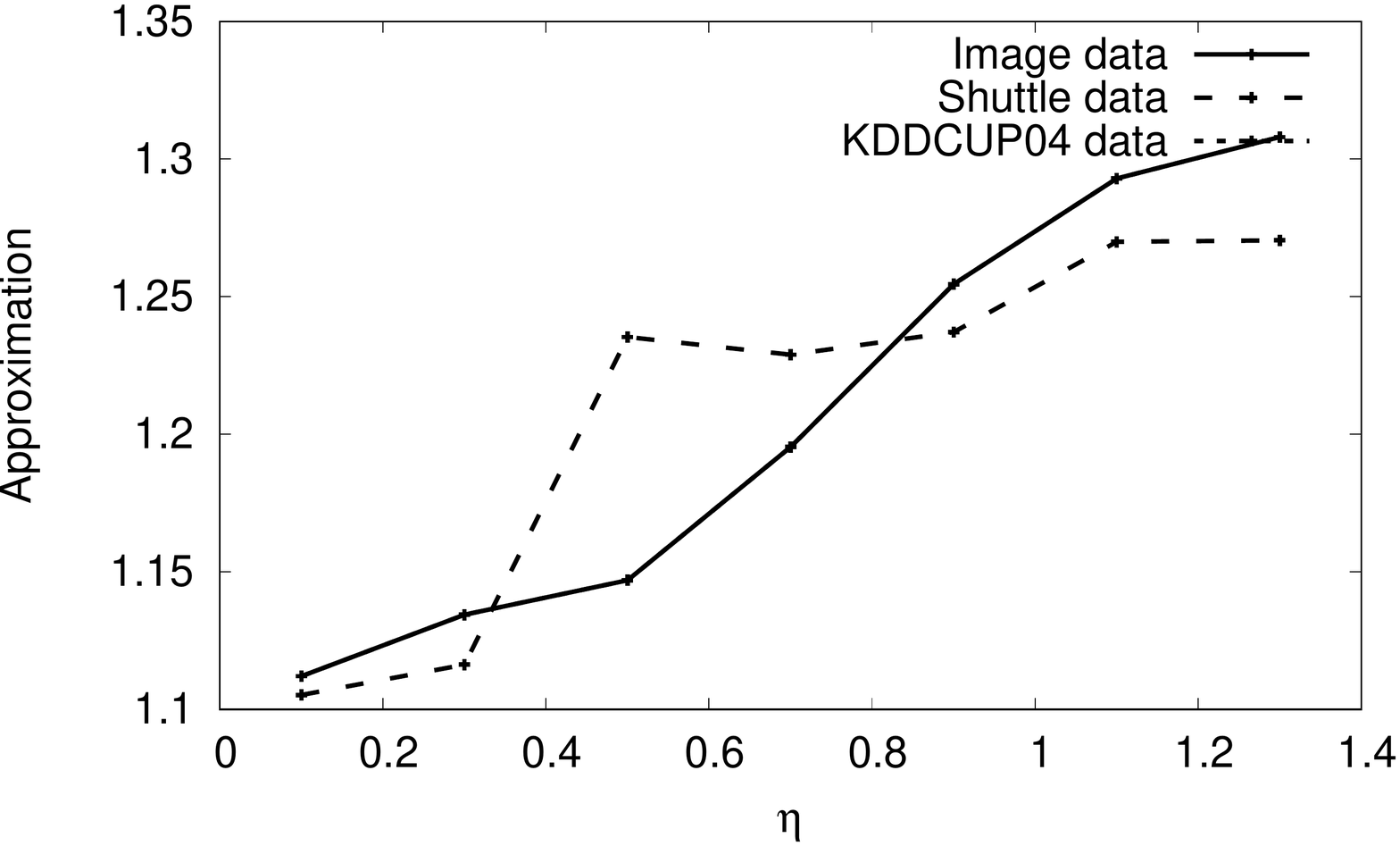}
		\captionof{figure}{Approximation vs. $\eta$ parameter}
		\label{plt:approxvseta}
	\end{minipage}
\end{center}
\begin{center}
	\begin{minipage}{0.49\linewidth}
		\includegraphics[width=\textwidth]{./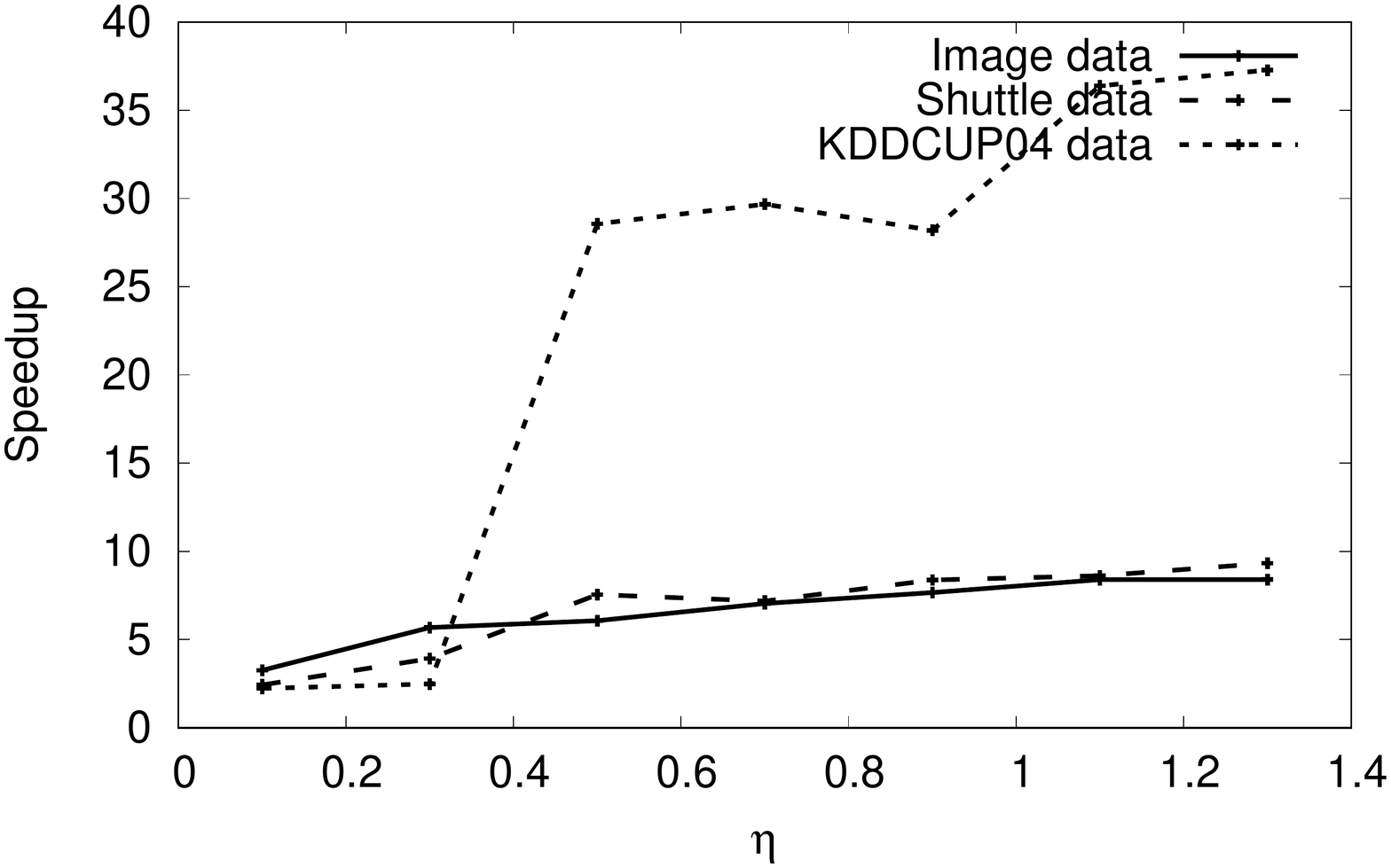}
		\captionof{figure}{Speedup vs $\eta$ parameter, 1m/7w cluster}
		\label{plt:speedupvseta}
	\end{minipage}
	\begin{minipage}{0.49\linewidth}
		\includegraphics[width=\textwidth]{./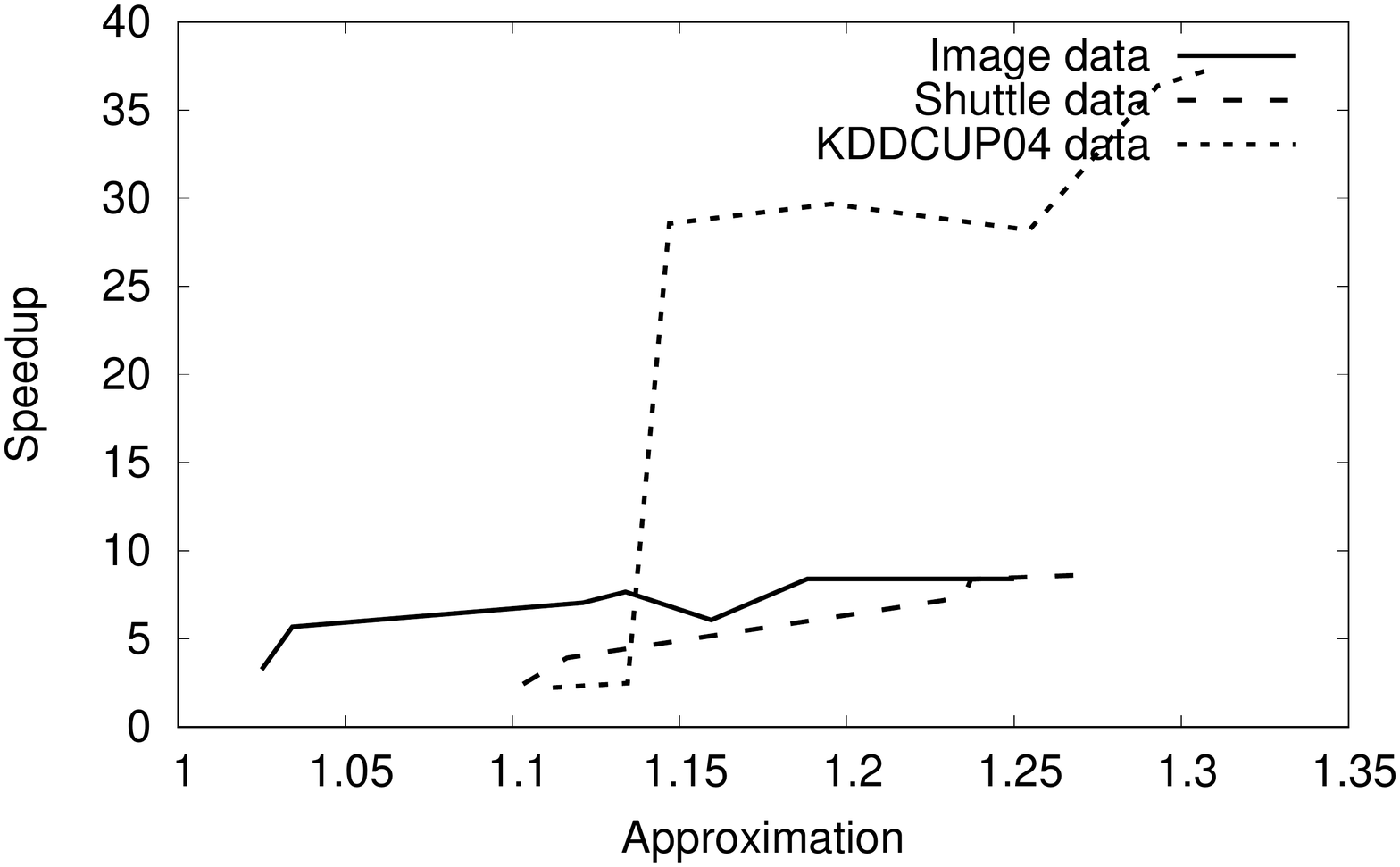}
		\captionof{figure}{Speedup vs. approximation,1m/7w cluster}
		\label{plt:speedupvsapprox}
	\end{minipage}
\end{center}
\begin{center}
	\begin{minipage}{0.49\linewidth}
		\includegraphics[width=\textwidth]{./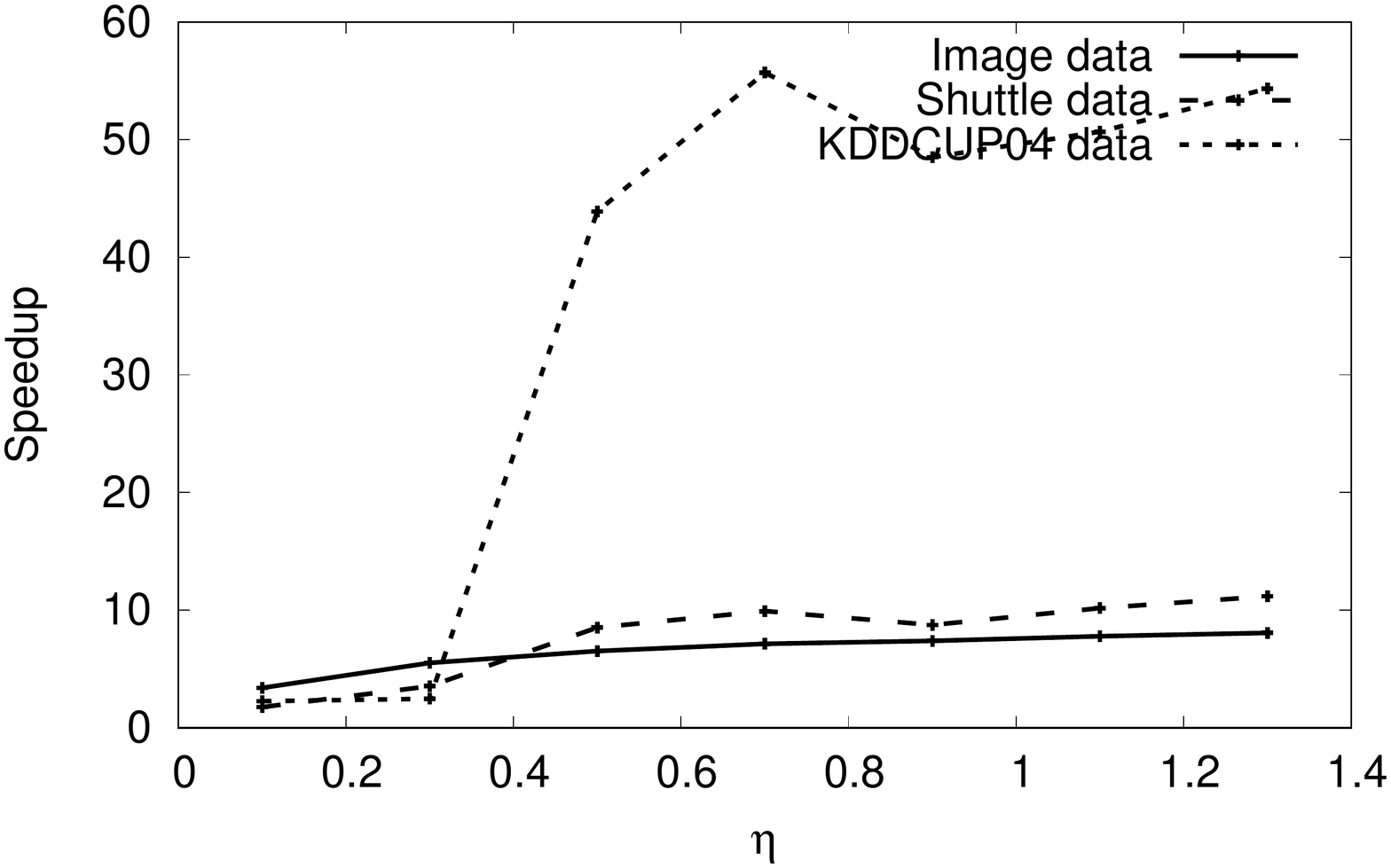}
		\captionof{figure}{Speedup vs $\eta$ parameter, 1m/3w cluster}
		\label{plt:speedupvseta2}
	\end{minipage}
	\begin{minipage}{0.49\linewidth}
		\includegraphics[width=\textwidth]{./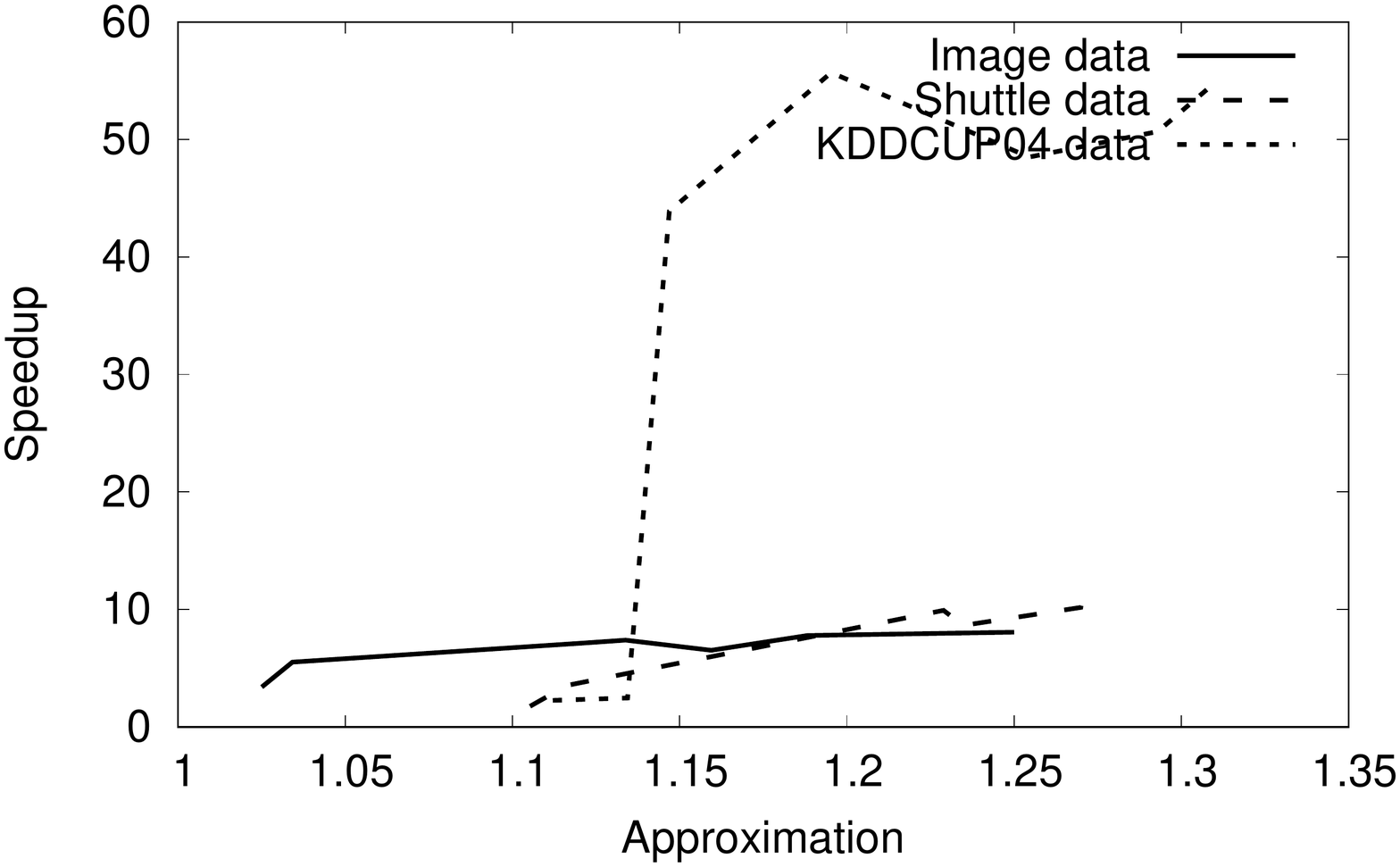}
		\captionof{figure}{Speedup vs. approximation,1m/3w cluster}
		\label{plt:speedupvsapprox2}
	\end{minipage}
\end{center}

\begin{center}
	\begin{minipage}{0.49\linewidth}
		\includegraphics[width=\textwidth, height=0.25\textheight, clip, trim=60pt 90pt 25pt 70pt]{./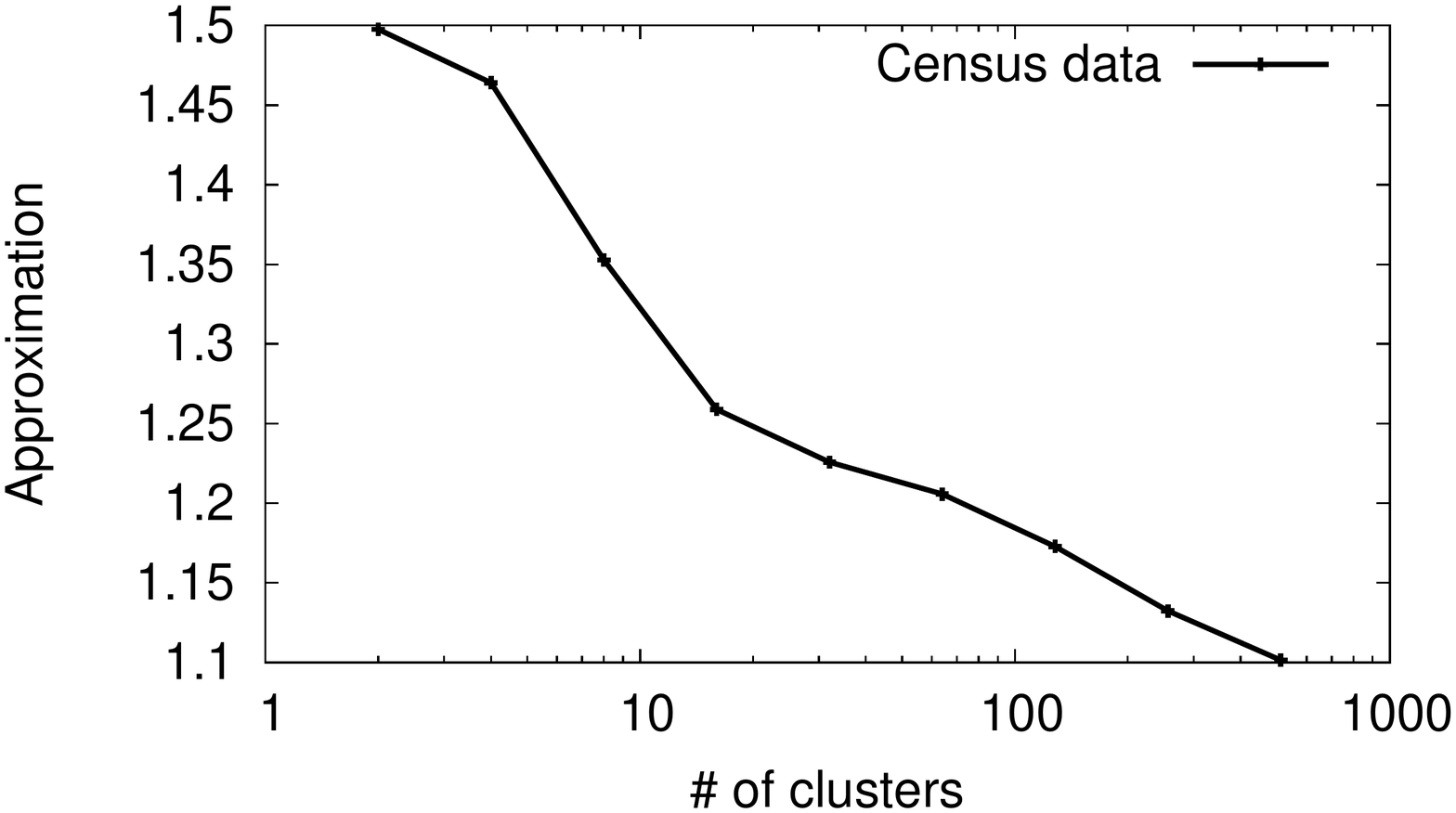}
		\captionof{figure}{Approximation vs. number of clusters}
		\label{plt:census-approxvk}
	\end{minipage}
	\begin{minipage}{0.49\linewidth}
		\includegraphics[width=\textwidth, clip, trim=60pt 60pt 25pt 40pt]{./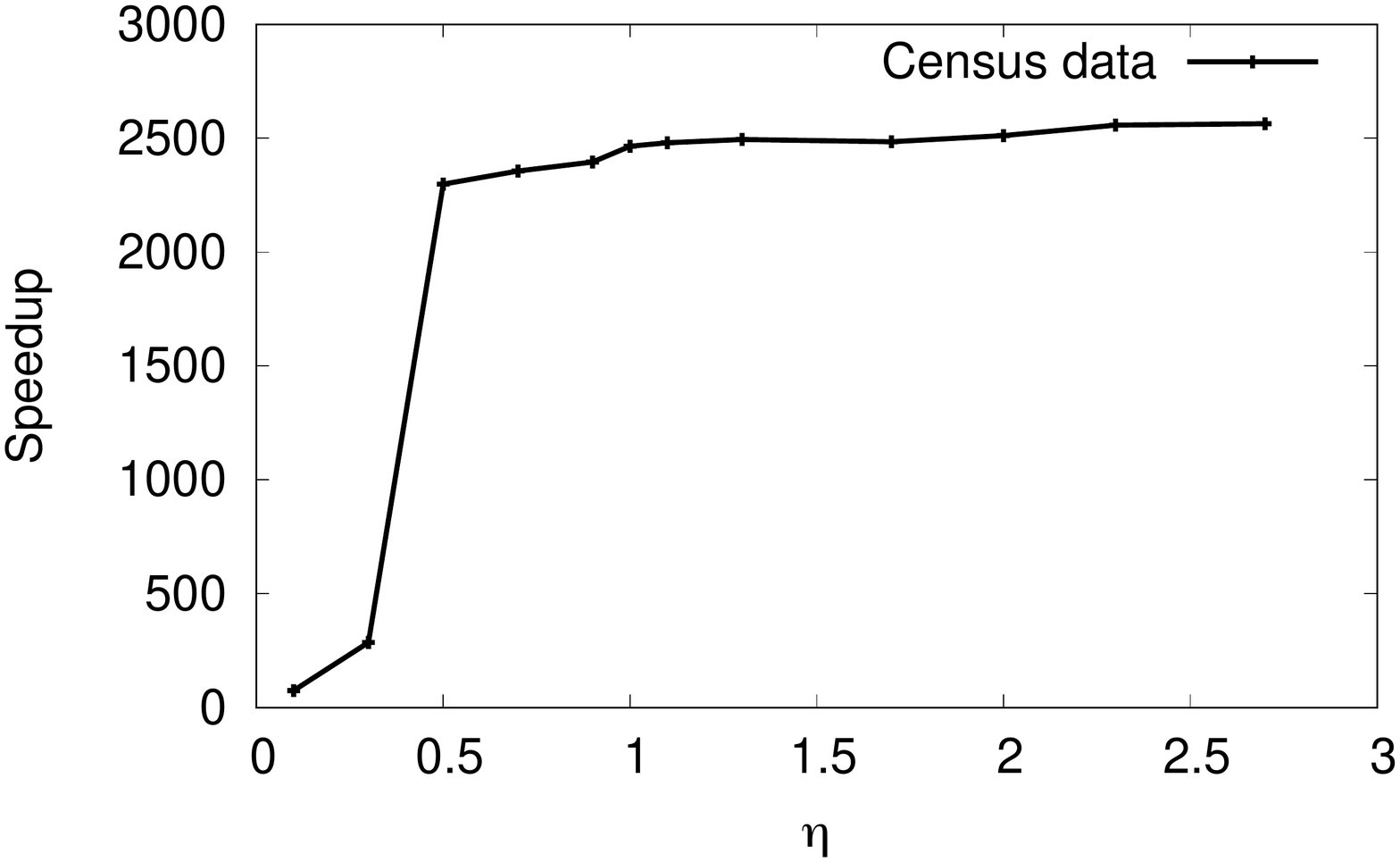}
		\captionof{figure}{Speedup vs. $\eta$}
		\label{plt:census-speedupveta}
	\end{minipage}
\end{center}

\begin{center}
	\begin{minipage}{0.49\linewidth}
		\includegraphics[width=\textwidth, clip, trim=60pt 60pt 25pt 40pt]{./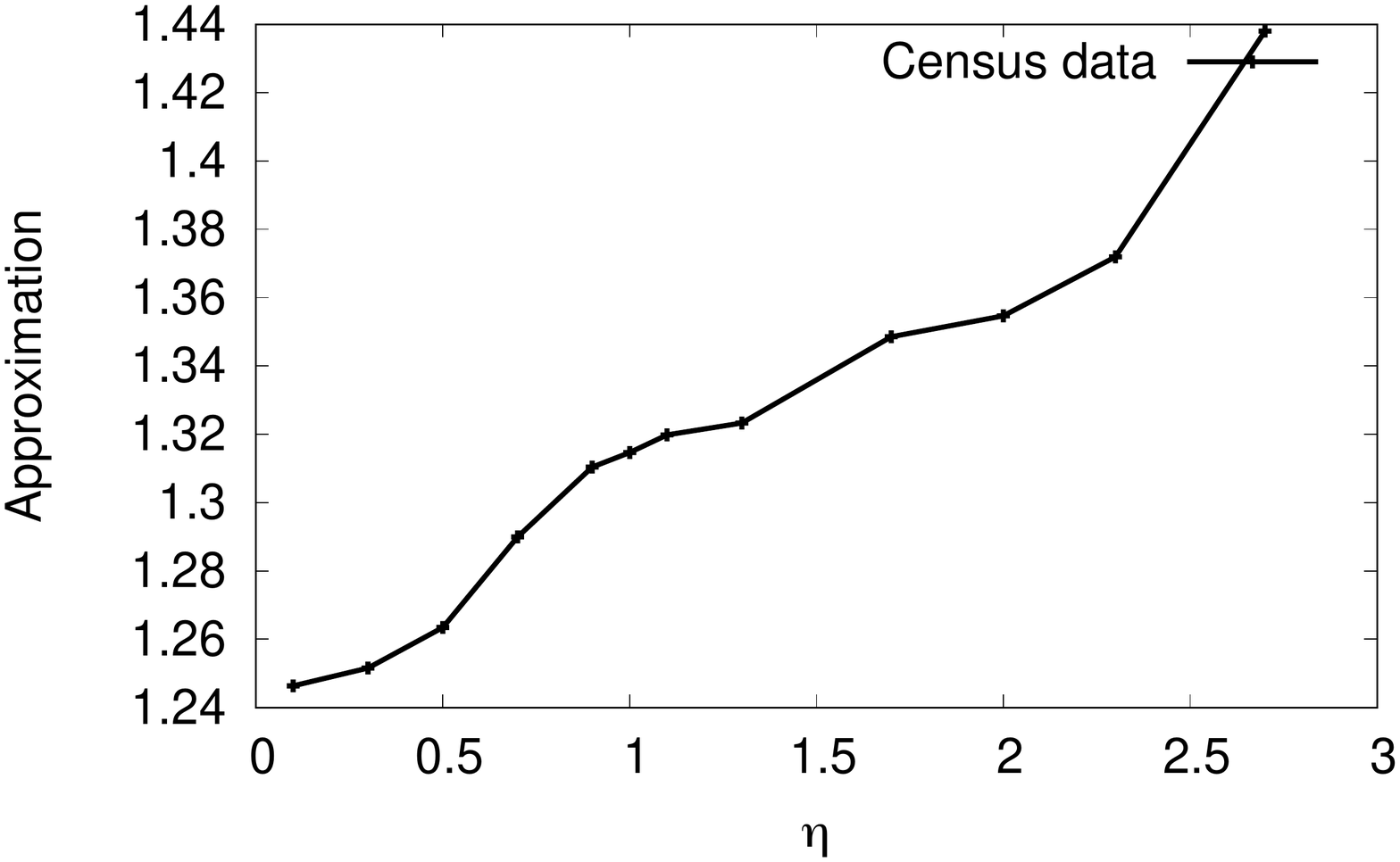}
		\captionof{figure}{Approximation vs $\eta$ parameter}
		\label{plt:census-approxveta}
	\end{minipage}
	\begin{minipage}{0.49\linewidth}
		\includegraphics[width=\textwidth, clip, trim=60pt 60pt 25pt 40pt]{./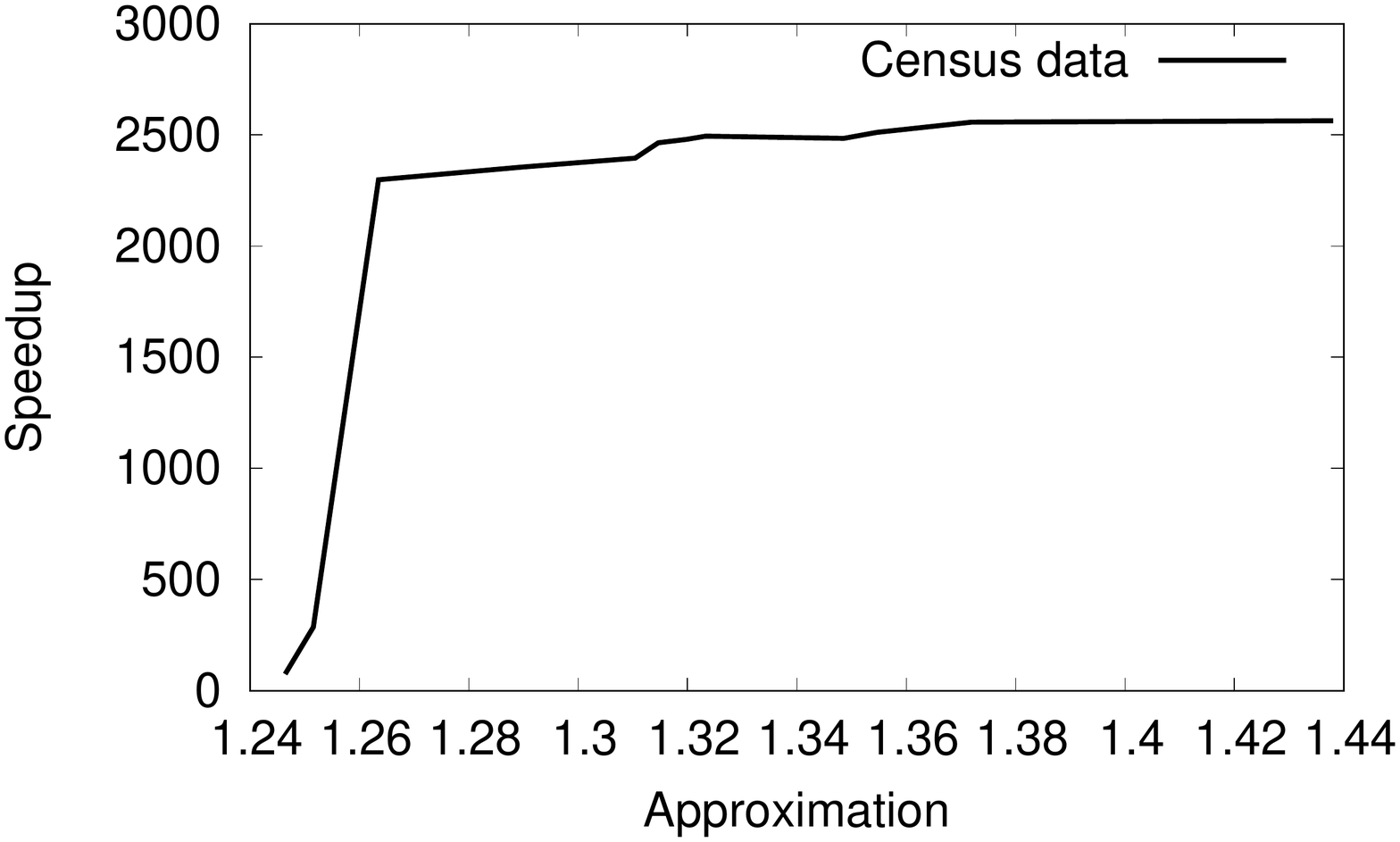}
		\captionof{figure}{Speedup vs. approximation}
		\label{plt:census-speedupvapprox}
	\end{minipage}
\end{center}

\subsubsection*{Results}

Figure~\ref{plt:approxvsk} shows dependence of apprxoimation to the $k$-SLC objective as a function of $k$ for $\eta = 0.5$.
Figure~\ref{plt:approxvseta} shows how approximation varies empirically as a function of $\eta$ (for $k = 10$).
Evaluation of time performance demonstrates more than an order of magnitude speedup over sequential Prim's algorithm\footnote{We use Prim's algorithm as a sequential benchmark as our local computation steps are also performed using Prim's algorithm in the experiments. For datasets of our size Prim's algorithm outperformed approximate nearest-neighbor (ANN) based algorithms locally and it was our goal to use the best local algorithm. In order to observe performance improvement from using ANN algorithms larger datasets are required due to several log-factors and a large constant in the theoretical almost-linear time complexity.} for $k=10$ as a function of $\eta$ parameter and empirical approximation. Results are given for two different cluster setups: 1m/7w ( Figure~\ref{plt:speedupvseta} and Figure~\ref{plt:speedupvsapprox}) and 1m/3w (Figure~\ref{plt:speedupvseta2} and Figure~\ref{plt:speedupvsapprox2}) and are averaged over multiple runs to ensure consistency.
We note that dramatic increase in speedup for the KDDCUP04 dataset around value $\eta = 0.5$ and approximation $1.15$ corresponds to the fact that local inputs start to fit in L2-cache which provides more than an order of magnitude improvement over RAM.

Figure~\ref{plt:census-approxvk} shows the dependence of approximation as a function of $k$, ($\eta = 1.3$) for the census data. Figure~\ref{plt:census-approxveta} shows how approximation varies as a function of $\eta$ (for $k = 10$) in the census dataset. Figure~\ref{plt:census-speedupveta} demonstrates that we achieve more than an order of magnitude of speed for the census data too. Figure~\ref{plt:census-speedupvapprox} shows the dependence of speedup as a function of approximation for the parameter $\eta = 1.3$. Once again  like on the KDDCUP04 dataset we observe a dramatic increase in the speedup at around $\eta = 0.5$ and approximation $1.15$ due to the fact the local inputs start to fit in L2-cache.

Figure~\ref{plt:census-speedupvapprox} shows the dependence of speedup as a function of approximation.  
We observe a dramatic increase in the speedup at around approximation $1.26$ due to the fact  local inputs start to fit in L2-cache.
Figure~\ref{plt:census-approxvk} shows the dependence of approximation as a function of $k$  for the census data.

In all cases our algorithms terminated in at most $40$ rounds. All experiments were performed under $\ell_2$-distance.  Since our $\ell_1$ and $\ell_\infty$ algorithms use the same partitioning scheme their performance under these metrics is similar.

\vspace{20pt}

\section{Conclusions}

We give $O(\log n)$-round MPC algorithms for $k$-Single-Linkage Clustering under popular distance metrics: $\ell_0, \ell_1,\ell_2$ and $\ell_\infty$. Our hardness results justify this round complexity assuming the most popular conjectures in the MPC literature. Experimental results show improvement in time of several orders of magntitude over sequential algorithms. 

Among interesting directions for further research we would like to highlight two: extending our results to higher dimensions and considering other popular linkage-based objectives (complete linkage and average linkage). In particular, we believe that some of our techniques should be applicable to the latter problem. Note that unlike for single-linkage there is no global guarantee about the outcome of complete-linkage and average-linkage clustering algorithms (see \cite{DL05}), hence one can only hope to achieve approximation for individual steps rather than for the final result. In particular, this justifies our focus on the single-linkage objective which is the only linkage-based objective for which sequential algorithms with global guarantees are known.

Another promising direction for future research is understanding the complexity of optimizing the entire hierarchy using a single objective as suggested by Dasgupta in~\cite{D16}.
Compared to the general graph metric setting which has been studied extensively~\cite{RP16,CC17} we are unaware of any such studies for vector data under $\ell_p$-distances.

\subsubsection*{Acknowledgments}
We would like to thank Alexandr Andoni, Aleksandar Nikolov and Krzysztof Onak who actively participated in the early stages of this project including multiple discussions of the MST algorithm of~\cite{ANOY14a} and its relationship to the Single-Linkage Clustering problem.
	
	\newpage
	
	\bibliography{slc-mrc}
	\bibliographystyle{abbrv}
	
	\newpage
	\appendix

\end{document}